\renewcommand\footnotetextcopyrightpermission[1]{} 
\pgfplotsset{compat=newest}
\newcommand{\advi}{{\mathcal A}}
\newcommand{\hsti}{{\mathcal H}}
\newcommand{\adv}{${\mathcal{A}}$\xspace}
\newcommand{\cset}{{\bf C}}
\newcommand{\st}{{\bf S}}
\newcommand{\pst}{{\bf P}}
\newcommand{\dex}{{DET}}
\newcommand{\hyb}{${\tt HYB}$}
\newcommand{\ths}{{\zeta}}
\newcommand{\intv}{{\mathbb I}}
\newcommand{\txp}{{\mathbb T}}
\newcommand{\cc}{{\mathbb C}}
\newcommand{\pc}{{\mathbb P}}
\newcommand{\queue}{{\mathbb Q}}
\newcommand{\trans}{{\bf P}}
\newcommand{\bcn}{{\bf B}}
\newcommand{\prot}{{\sc Tuxedo}}
\newcommand{\md}[1]{M/D/#1}
\newcommand{\gas}{{\sf gas}}
\newcommand{\type}{{\sf type}}
\newcommand{\typei}{{\sf type1}}
\newcommand{\typeii}{{\sf type2}}
\newcommand{\rcv}{{\sf to}}
\newcommand{\snd}{{\sf from}}
\newcommand{\toki}{{\sf token1}}
\newcommand{\tokii}{{\sf token2}}
\newcommand{\dep}{{\sf Deposit}}
\newcommand{\user}{${\sf user}$}
\newcommand{\contract}{${\sf contract}$}
\newcommand{\one}{\vspace{1mm}}
\newcommand{\vinay}[1]{\textcolor{purple}{{\bf Vinay:} #1}}
\newcommand{\rl}[1]{\textcolor{cyan}{{\bf Ling:} #1}}
\begin{document}
\title{Better Late than Never; Scaling Computation in Blockchains by Delaying Execution} 

\author{Sourav Das}
\affiliation{%
  \institution{UIUC}}
\email{souravd2@illinois.edu}

\author{Nitin Awathare}
\affiliation{%
  \institution{IIT Bombay}}
\email{nitina@cse.iitb.ac.in}

\author{Ling Ren}
\affiliation{%
  \institution{UIUC}}
\email{renling@illinois.edu}

\author{Vinay J. Ribeiro}
\affiliation{%
  \institution{IIT Bombay}}
\email{vinayr@iitb.ac.in}

\author{Umesh Bellur}
\affiliation{%
  \institution{IIT Bombay}}
\email{umesh@cse.iitb.ac.in}


\begin{abstract}
Proof-of-Work~(PoW) based blockchains typically
allocate only a tiny fraction (e.g., less than 1\% for Ethereum) of
the average interarrival time~$(\intv)$ between blocks for validating
transactions.
A trivial increase in validation time~$(\tau)$
introduces the popularly known Verifier's Dilemma, and as 
we demonstrate, causes more forking and increases unfairness.
Large $\tau$ also reduces the tolerance for safety against
a Byzantine adversary.
Solutions that offload validation to a set of non-chain nodes 
(a.k.a. {\em off-chain}  approaches) suffer from trust
and performance issues 
that are non-trivial to resolve.

In this paper, we present \prot, the first on-chain 
protocol to theoretically scale $\tau/\intv \approx 1$ in
PoW blockchains. The key innovation in \prot\ is to separate
the consensus on the ordering of transactions from their 
execution. We achieve this by allowing miners to delay 
validation of transactions in a block by up to $\ths$ 
blocks, where $\ths$ is a system parameter. 
We perform security analysis of \prot\ considering all 
possible adversarial strategies in a synchronous network 
with maximum end-to-end delay $\Delta$ and demonstrate that \prot\ 
achieves security equivalent to known results for longest 
chain PoW Nakamoto consensus. Additionally, we also suggest 
a principled approach for practical choices of parameter 
$\ths$ as per the application requirement. Our prototype 
implementation of \prot\ atop Ethereum demonstrates that 
it can scale $\tau$ without suffering the harmful effects 
of na\"ive scaling in existing blockchains.
\end{abstract}

\maketitle
\pagestyle{plain}
\section{Introduction}
\label{sec:introduction}
One major problem of PoW blockchains such as Bitcoin and Ethereum 
is that they have useful compute power of orders of 
magnitude less than a typical desktop. For example, the gas limit
of each Ethereum block corresponds to a block processing time 
($\tau$) of approximately $150$ milliseconds.%
\footnote{Measured using a virtual machine with 16 cores, 
120GB memory, 6.4TB NVMe SSD, and 8.2 Gbps network bandwidth.} 
Thus only $1\%$ of the block interarrival time ($\intv$) of 15 seconds 
in Ethereum is used for executing transactions. This prevents 
permissionless blockchains from accepting blocks that contain 
computationally-heavy transactions.
Such computationally-heavy transactions are desirable for
applications such as cryptographic trusted setup and
privacy-preserving computation.

\noindent {\bf Problems due to large $\tau$.} 
To see why one cannot arbitrarily increase $\tau$, we must understand
the actions taken by a miner on receiving a new block.
The miner first validates 
the block by executing all its transactions. 
It then forms 
its own block whose transactions it executes, and finally starts
Proof-of-Work~(PoW) to mine a new block. 

Some permissionless systems, such as Ethereum, 
require each block to store a cryptographic digest of the latest blockchain state, which must be verified by miners while validating a block. This digest has many uses.
First, it assists users 
with low computation resources~(a.k.a., {\em light clients}) 
to efficiently validate, or prove to other light clients, a portion of the latest state. 
Second, it helps new nodes to quickly bootstrap and 
join the system. 
Third, it makes the system 
more compatible with stateless 
cryptocurrencies~\cite{chepurnoy2018edrax,tomescu2020aggregatable}. 

As a result, higher validation time $\tau$ ``eats into'' PoW time. 
This gives significant advantages to miners with higher block 
processing power than others, and also opens up the system to various 
attacks. As we demonstrate in~\textsection\ref{sec:evaluation} and
Appendix \ref{apx:high tau ethereuem}, with a larger $\tau$,
an adversary \adv who skips
validation of received blocks and/or its own created blocks 
can mine more than its fair share of blocks relative to 
its mining power on the main chain. For example, our experiments 
show that when $\tau/\intv=0.2$, an adversary \adv controlling 33\%
of the mining power can mine as much as 68\% of blocks
on the main chain. A large $\tau$ also leads to the 
well-known Verifier's Dilemma~\cite{luu2015demystifying} 
where a rational miner has to make the hard choice between 
validating blocks or not. The first choice
reduces its chances of mining the next 
block, and the second increasing its mining chances but
comes with the risk of accepting invalid blocks.
Moreover, increasing $\tau$ leads to 
a higher backlog of blocks to be processed at miners, 
which delays block forwarding. This leads to more forks 
and wasted mining power and lowers the adversarial tolerance 
of the system~\cite{pass2017analysis,ren2019analysis}.
For these reasons, PoW blockchains currently keep block 
validation time small relative to the block interarrival 
time, i.e. $\tau/\intv \ll 1$.

\noindent {\bf Previous Work.} 
Existing works that enable blocks with heavy 
computation do so through {\em off-chain} 
solutions~\cite{kalodner2018arbitrum,eberhardt2018zokrates,
bowe2018z,teutsch2017scalable,das2018yoda,cheng2019ekiden,
wust2019ace}. 
Rather than having all miners execute transactions, which 
we term {\em on-chain} computation, these methods 
delegate computation to a subset of miners or
groups of volunteer nodes. These solutions make additional 
security assumptions, beyond those required for PoW consensus, 
so that miners can validate 
the results that voluntary nodes submit.
Also, most off-chain solutions make restrictive assumptions 
about the interaction between contracts, e.g.,
one smart contract does not internally invoke functions 
of other smart contracts. 
Such interactions are desirable and often occur in 
practice (see Appendix~\ref{apx:interactive contracts}).
An {\em on-chain} solution if designed carefully can be 
made to automatically inherit the existing functionality of 
interaction between smart contracts and also the security 
guarantees of the underlying blockchain. 

\noindent {\bf Our Approach.} 
In this paper we propose \prot, the first on-chain 
solution that can theoretically scale $\tau/\intv$ close to $1$ 
while circumventing the problems discussed above.
As a result, \prot\ can increase the useful computing power
of PoW blockchains significantly to allow transactions with
non-trivial execution time. 

The core idea behind \prot\ is to separate the consensus on 
the ordering of transactions in the blockchain from their validation. 
We achieve this by allowing miners to delay validation of 
transactions in a block at height $i$ ($B_i$) until the arrival 
of the block at height $i+\ths$ ($B_{i+\ths}$) where 
$\ths$ is a system parameter. 
Essentially,  $B_{i+\ths}$ contains the cryptographic digest of the state corresponding to the execution of all transactions up to an including those in $B_i$. Hence we refer to this 
approach as {\em Delayed Execution of Transactions}~(DET). 
This way, the validation of transactions in a block can be 
done in parallel with the PoW, thereby side-stepping the 
competition between validation and PoW.

While the idea of DET may seem simple, securely adopting it
in PoW systems turns out to be non-trivial. The major challenge
arises due to the variability in the deadline (that is the arrival 
of $B_{i+\ths}$) for validating transactions (of $B_i$). As block
generation is a random process (often modeled as Poisson), there 
is the possibility (albeit rare) that an honest miner may fail 
to execute transactions in $B_i$ before receiving $B_{i+\ths}$. 
In that case, the honest miner will not be able to validate 
$B_{i+\ths}$ immediately on its arrival.  
As a remedy, in \prot\ an honest miner always extends its longest 
{\em validated} chain. 

It is also possible that a miner does not have the digest ready 
for the next block it wants to mine on the longest validated chain. 
For example, suppose $B_{i+\ths}$ is the last block in the longest validated chain and that the miner has executed all transactions in $B_i$. However, it has not yet validated $B_{i+1}$ and hence it does not have the digest ready to put into $B_{i+\ths+1}$ which it wants to mine.  
In such situations, \prot\ allows miners to put a special default 
state in place of the required state. Unless otherwise stated, we refer 
to this default state as the {\em empty state}. 

Using standard 
techniques from Queueing theory, we prove~(\S\ref{sub:reduction}) 
that these changes ensure that \prot\ achieves {\em Chain growth}, 
{\em Chain quality} and {\em Safety} similar to known results 
for Longest chain PoW~\cite{pass2017analysis,kiffer2018better,
ren2019analysis}. Also, we prove in~\S\ref{sub:zeta choice} that 
the fraction of blocks with empty state mined by the honest miner
can be reduced arbitrarily by setting $\ths$ appropriately. 

\noindent In summary, we make the following contributions:
\begin{itemize}
  \item We illustrate through analysis and experiments that a
  naive increase of $\tau$ in legacy blockchains gives unfair
  advantages to miners with faster processing power. 
  An adversary \adv can further exacerbate the unfairness 
  by skipping validation of received blocks and creating blocks
  that it can process quickly.

  \item We design \prot, a secure on-chain approach that can 
  theoretically scale $\tau/\intv$ to 1  in PoW based 
  permissionless blockchains.

  \item We theoretically prove security guarantees of \prot\ 
  under a synchronous network with end-to-end network 
  delay $\Delta$ and fixed processing time $\tau$.\footnote{Pass 
  et al.~\cite{pass2017analysis} use the term "asynchronous network" for a network with the same constraints.} Our analysis considers all
  possible strategies by a Byzantine adversary controlling up
  to $f_{\rm max} < 0.5$ fraction of the mining power. We also present
  an approach to choose $\ths$ in order to achieve any desired 
  fraction of honest blocks with non-empty state. 

  \item We implement \prot\ on top of the Ethereum Geth client and 
  evaluate it in an Oracle cloud with 50 virtual machines emulating the
  top 50 Ethereum miners. Our evaluation demonstrates that \prot\ 
  does not suffer from certain fairness problems unlike Ethereum does,
  for a high value of $\tau/\intv$.
\end{itemize}

\noindent{\bf Paper Organization.} 
In \S\ref{sec:system} we present our system model 
and assumptions. This is followed by a brief background of 
block validation process of legacy blockchains and attacks on 
them with high $\tau$ in \S\ref{sec:background}.
In \textsection\ref{sec:design} we introduce the concept 
of Delaying Execution of Transactions and describe how \prot\
employs it to achieve high validation time. We next describe 
our implementation methodology in \S\ref{sec:implementation}.
We then present our theoretical analysis demonstrating 
security of \prot\ in \textsection\ref{sec:analysis}. 
\textsection\ref{sec:evaluation} describes our prototype 
implementation of \prot\, experimental setup and observations from 
experimental results. We describe the related
work in \textsection\ref{sec:related work}. We discuss few FAQs in \textsection\ref{sec:faq} and conclude the work in \textsection\ref{sec:discussion}.

\section{System Model}
\label{sec:system}
We consider a permissionless system consisting of 
a set of miners. These miners form a connected 
network and run a blockchain protocol with Proof-of-Work~(PoW)
as the underlying consensus.
All honest miners mine blocks on top of the 
longest validated chain known to them 
(see~\textsection\ref{sub:dxt challenges}).
Block generation in \prot\ is assumed to follow a Poisson process 
with the rate $\lambda$ where $\lambda$ depends on the mining power 
of the network and difficulty of the PoW
puzzle. Each miner $n_a$ controls $p_a$ fraction of the mining power. 
Hence, any arbitrary miner $n_a$ will generate blocks at a rate 
$\lambda_a=p_a\lambda$. \prot\ allows execution of Turing Complete 
programs called {\em Smart Contracts}. A smart contract can be 
created by sending a transaction to deploy it on the blockchain. 
Once a  contract appears in the blockchain, its exposed functionality 
can be invoked by other miners through transactions. 

Smart contracts in \prot\ have unique IDs, and they maintain 
state, where state corresponds to the unique set of 
key-value pairs stored at each miner and is controlled by 
the program logic of the smart contract. For any arbitrary 
smart contract $c_z$, we use $\sigma_z$ to denote the state of 
smart contract. In addition to contracts, 
\prot\ maintains {\em Accounts}, which maintains tokens. 
Each account also has a globally unique ID which is the public key 
of a public-private key pair generated from a secure asymmetric 
signature scheme. Additionally, \prot\ has {\em Clients} which own accounts, can generate transactions to create smart
contracts, invoke their functions and transfer tokens from one 
account to another.

Transactions in \prot\ are ordered in a {\em Transaction Ordered 
List} (TOL) and are included in a block. We use $T_i$ to denote
the $i^{\rm th}$ TOL. The contracts generated by transactions 
in TOLs $\{T_1,T_2,\ldots,T_i\}$ are denoted as 
\cset$_i = \{c_z|z=1,2,\cdots\}$ and the corresponding state 
as \st$_i=\{\sigma_z|z=1,2,\cdots\}$. 
Each miner locally maintains states and updates it by executing a 
given TOL. Formally, with initial state $\st_{i-1}$, the execution 
of the TOL $T_i$ is denoted by:
\begin{equation}
\st_i = \Pi(\st_{i-1}, T_i)
\end{equation}
where $\Pi$ denotes the deterministic state transition function that 
executes transactions in $T_i$ in the order they appear. 

Let $\bcn_l=\{B_0,B_1,\cdots,B_l\}$ be the blocks known to $n_a$. 
In addition to state, $n_a$ maintains a transaction pool 
$\txp^{(a)}_l$, which contains the set of valid transactions created by 
clients that are yet to be included in a block till $B_l$. 
Hereon, when clear from the context, we drop the superscript from 
$\txp_l^{(a)}$ for ease of notation.

\vspace{0.5mm}
\noindent{\bf Assumptions.}
We assume the underlying network to be synchronous with end-to-end delay
of at most $\Delta$, i.e., all messages sent by an honest miner gets 
delivered to every other honest miner within time $\Delta$ from its 
release.
Also, we assume that all honest miners process blocks in any 
particular chain serially at the rate of $1/\tau$, where $\tau$ is 
the maximum time needed to validate a block. Like Ethereum, this 
can be enforced by requiring each transaction to specify the 
maximum time needed for its execution and keeping a cap $\tau$ 
on the total execution time a block. 
Note that block processing is different from mining a block; 
mining involves solving the PoW puzzle, whereas processing is 
about executing the transactions inside a block. Also, a block 
with high validation time neither implies a large block size 
nor that the block has a large number of transactions in it. 
A small block containing a few computationally-heavy transactions 
can require a large validation time.

We also assume that the block processing at an honest
miner does not contend with PoW, and the honest miners can 
simultaneously process blocks in distinct forks of the 
blockchain tree. We envision that our system will be adopted
by blockchains such as Bitcoin and Ethereum where block
processing can be done using CPUs while mining requires 
ASICs. Furthermore, the number of simultaneous forks in them
are quite small~\cite{gencer2018decentralization}.

We assume the presence of an adversary \adv, who can control 
up to $f_{\rm max} < 1/2$ fraction of total mining power of the network
and generate blocks at a rate $\beta=f_{\rm max}\lambda$.  
Adversarial miners can be Byzantine and can deviate 
arbitrarily from the specified protocol. 
The remaining miners are honest, control the remaining $(1-f_{\rm max})$
the fraction of the mining power, generates blocks at a rate 
$\alpha=(1-f_{\rm max})\lambda$,
and strictly follow the specified protocol. \adv can see every 
message sent by honest parties immediately and can inject its
messages at any point in time. Also, \adv can delay messages
sent by the honest parties by a maximum of $\Delta$ time.

\section{Block Validation in Legacy Blockchains}
\label{sec:background}
In this section, we first give some background on the block validation
process in Ethereum and later demonstrate why increasing $\tau$
leads to reduced fairness in terms of the fraction of blocks mined by an
honest miner. This background assists us in identifying the core 
problem behind smaller block validation time in existing systems. 
In the later sections~(\S\ref{sec:design}) we will describe how 
\prot\ securely addresses these issues. 

Ethereum is designed to force miners to validate blocks that 
they receive. To understand why, we must note that
Ethereum state is not explicitly stored on the blockchain, 
only its digest is. 
For a miner to create a potential block of its own which 
includes a mining reward, it must put a digest of the new state
resulting from this block in the header. However, the state
resulting from executing the transactions of the previous block 
acts as an essential starting point to obtaining the correct 
state to put in its own block. Hence the miner is forced to 
execute transactions in the previous block which it received.

Once a miner successfully creates its own block, it starts
mining, i.e., solving PoW on this block. During this PoW two
things can happen: {\em first}, the miner receives a conflicting
block created by a different miner at the same height as its 
own potential block; and {\em second}, the miner successfully 
solves the PoW, broadcasts its own block with the valid PoW and 
proceeds to create the next block, extending its very own recent 
block. Let us refer to the time spent validating the received 
block as the {\em Validation} phase; the time spent in creation 
of next block as the {\em Creation} phase, and time spent in 
PoW as the {\em Mining} phase. 
As honest miners do not solve for PoW during the validation and
creation phase, we are interested in the time these phases 
takes to complete. For this purpose, we will first scrutinize
these phases more carefully.



\begin{figure}[t!]
    \centering
    \includegraphics[width=0.90\linewidth]{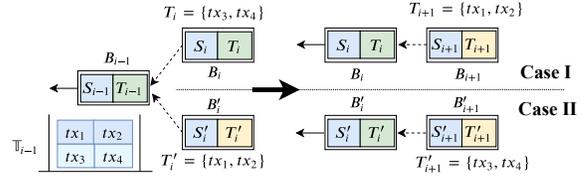}
    \caption{State transition at an Ethereum miner with $i-1$ 
    blocks on arrival of the $i^{th}$ block. Dashed arrows 
    represent next potential block and solid arrow represents
    actual arrival of a block. In case I, $n_a$ receives $B_i$,
    a block  mined by a different miner and in case II, $n_a$ mines 
    the block $B'_i$ by itself.}
\label{fig:eth block validation}
\end{figure}



If the longest chain known to an honest miner $n_a$ has length 
$i-1$, with $B_{i-1}$ at the tail of the chain as shown in 
Figure~\ref{fig:eth block validation}, $n_a$ tries to mine
the next block $B'_i$ at height $i$. Let $\st_{i-1}$ be
the state after executing TOLs till $B_{i-1}$ and 
$\txp_{i-1}=\{tx_1,tx_2,tx_3,tx_4\}$ be the latest transaction 
pool. For $B'_{i}$, $n_a$ first picks up a TOL $T'_{i}$ 
(e.g., $T'_{i}=\{tx_1, tx_2\}$), executes its transactions 
in order of their occurrence and starts PoW on block $B'_i$.
Let $\st'_{i}=\Pi(\st_{i-1},T'_{i})$ be the updated state. 
Note that till $n_a$ successfully solves the PoW puzzle, the 
updated state is not committed
and remains cached at $n_a$.
As described earlier, while running the PoW algorithm for 
block $B'_{i}$, one of two things can happen: 
either $n_a$ receives a valid block $B_{i}$ at height $i$ 
from the network or $n_a$ successfully solves the PoW. 
We now describe these as Case I and II, respectively.

\vspace{0.5mm}
\noindent{\bf Case I.} 
Let $n_b$ be the miner of the block $B_i$ (containing ordered 
list $T_i$) that $n_a$ receives. 
Without any coordination between $n_a$ and $n_b$, it is likely 
that $T_{i} \ne T'_{i}$. In that case, $n_a$ first validates 
$B_{i}$ executing all the transactions in $T_{i}$. 
On successful validation, $n_a$ accepts the block and proceeds 
to create the block $B_{i+1}$ at height $i+1$ by picking a new 
TOL $T_{i+1}$ from $\txp_{i} =\txp_{i-1}\setminus T_{i}$. Case I 
of Figure~\ref{fig:eth block validation} illustrates this.

\vspace{0.5mm}
\noindent{\bf Case II.} Unlike Case I, $n_a$ commits the state update
due to execution of $T'_{i}$ and proceeds to create the block 
$B'_{i+1}$ at height $i+1$ after picking a new TOL $T'_{i+1}$ from 
$\txp_{i-1}\setminus T'_{i}$. 
In our example, $T'_{i+1}=\{tx_3, tx_4\}$. $n_a$ then executes the 
new TOL $T'_{i+1}$ and starts PoW for $B'_{i+1}$. Case II of 
Figure~\ref{fig:eth block validation} illustrates this.
\begin{figure*}[!t]
    \centering
    \includegraphics[height=5.0cm, width=0.80\textwidth]{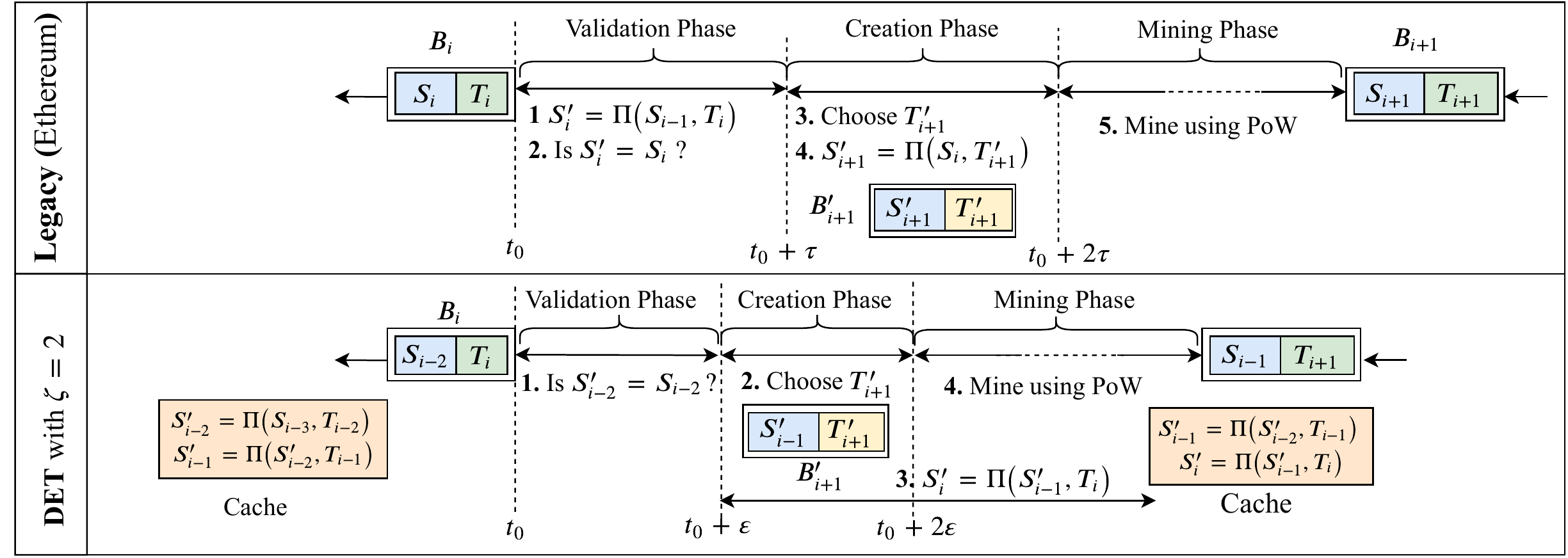}
    \caption{Actions taken by a miner on receiving a block $B_i$ at 
    time $t_0$ to validate $B_i$ and create the next block at height 
    $i+1$ in Ethereum~(top) and DET with $\ths=2$~(bottom). In DET, 
    $\epsilon$ denotes the time spent by the miner to validate the 
    received block and create the next one. Since validation/creation 
    in DET involves only a small (constant) number of operations, 
    $\epsilon \ll \tau$.}
    \label{fig:validation compare}
\end{figure*}

Another way to describe the block validation and creation 
mechanism is through the time intervals during which 
a miner validates the received block, creates the next block 
and performs PoW. 
Let $t_0$ be the time instant of the arrival of the block at 
height $i$ and let $\tau$ be the validation and creation time of 
a block. In case I, the miner validates the received block in
the time interval $(t_0, t_0+\tau)$, creates the next block 
$B'_{i+1}$ in time interval $(t_0+\tau, t_0+2\tau)$ and only at 
time $t+2\tau$ starts PoW for $B'_{i+1}$. 
However in case II, since $n_a$ himself is the creator of the 
block $B'_i$, it skips validation of $B'_i$ and spends the time 
$(t_0, t_0+\tau)$ in creating block $B'_{i+1}$, and starts PoW 
for $B'_{i+1}$ at time $t_0+\tau$. 
The top half of Figure~\ref{fig:validation compare} demonstrates 
the timings along with the computations a miner needs to perform 
for Case I in more detail.

\subsection{Consequences of high $\tau$ in legacy PoW Blockchains}
\label{sub:high tau ethereuem}
Ideally when all miners are honest and with no network delay, 
one would expect that the fraction of blocks mined by a miner should
be proportional to its mining power.
In this section, we demonstrate that is not the case, and show that 
when $\tau/\intv$ is high, the fraction of block mined by
honest miners heavily depends on their relative transaction 
processing speed in addition to their mining powers. 

Observe from Case II that the creator of a block
spends $\tau$ units of extra time (i.e., between $(t_0+\tau, t_0+2\tau)$)
for PoW while the remaining miners are busy creating the next block. 
This extra time $\tau$ increases its chances of mining the next block as well. 
This effect gets exacerbated if the miner controls a large mining 
power (say 30\%), because the miner will naturally mine blocks frequently 
and each of these blocks gives it an advantage to mine the next one as well.

More concretely, let $\lambda_a$ and $\tau_a$ be the block 
mining rate and block processing time of miner $n_a$, respectively. 
Let $c=\tau_a/\tau$ where $0\le c \le 1$, i.e $c$ is the ratio of 
block processing time of $n_a$ and remaining miners. $c=0$
implies that $n_a$ can process a block instantly independent of 
$\tau$. With these parameters, $n_a$ will spend only $2c\tau$
units of time in case I before starting PoW for the next block. 
Similarly, in case II, $n_a$ will spend only $c\tau$ units of
time creating the next block before starting PoW. Building on
this intuition, we theoretically compute the fraction of blocks
$n_a$ will mine in the longest chain for any given choice of
$\lambda_a$ and $c$ in Appendix~\ref{apx:high tau ethereuem}. 

Figure~\ref{fig:high tau honest} illustrates results from
our theoretical analysis. For example, with 
$\tau/\intv=0.26$, an honest miner who controls 30\% of the 
mining power and can validate or create blocks twice as fast as others,
i.e. $c=0.50$, will mine at least 46\% of the blocks. Further, a miner 
who skips both validation and creation of blocks, i.e with effective 
$c=0$ will mine at least 53\% of the blocks with 33\% of the 
mining power. We also measure the same using our experimental 
setup described in~\textsection\ref{sec:evaluation} with realistic network delays and observe 
that the network delay exacerbates the attack and allows \adv to mine 68\% of the 
blocks in the main chain.  
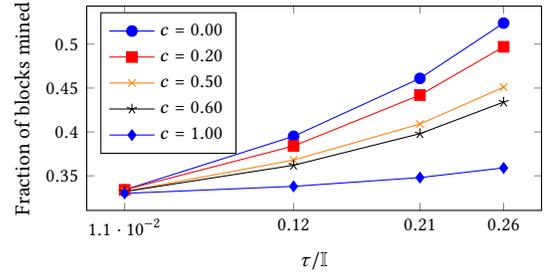
\begin{figure}[h!]
    \centering
    \pgfplotsset{footnotesize,height=4.3cm, width=0.90\linewidth}
    \begin{tikzpicture}
    \begin{axis}[
        legend pos=north west,
        ylabel=Fraction of blocks mined,
        xlabel=$\tau/\intv$,
        grid=minor,
        xtick={0.011, 0.122, 0.205, 0.260},
        ]
        \addplot table [x=t, y=c0, col sep=comma] {data/stat-honest.csv}; 
        \addplot table [x=t, y=c20, col sep=comma] {data/stat-honest.csv}; 
        \addplot [mark=x, orange] table [x=t, y=c50, col sep=comma] {data/stat-honest.csv}; 
        \addplot table [x=t, y=c60, col sep=comma] {data/stat-honest.csv}; 
        \addplot table [x=t, y=c100, col sep=comma] {data/stat-honest.csv};
        \addlegendentry{$c=0.00$}
        \addlegendentry{$c=0.20$}
        \addlegendentry{$c=0.50$}
        \addlegendentry{$c=0.60$}
        \addlegendentry{$c=1.00$}
    \end{axis}
    \end{tikzpicture}
    \caption{Fraction of blocks mined by miner $n_a$ with $~0.33$
    fraction of the total mining power, i.e $\lambda_a=\lambda/3$ for
    varying $c=\tau_a/\tau$.}
    \label{fig:high tau honest}
\end{figure}

\section{Design}
\label{sec:design}
In this section, we first give an overview of \prot. Then
we describe the concept of {\em Delayed Execution of 
Transactions}~(\dex), the core component of \prot,  
that enables us to make $\tau/\intv\approx1$. 
Next we describe a mechanism to make \dex\ robust against 
varying block interarrival time and a Byzantine adversary. 
Finally we describe the fee collection mechanism of \prot. 

\subsection{Overview}
\label{sub:overview}
Recall that a large transaction execution time ($\tau$) for a block is  unsuitable 
for a system like Ethereum. In Ethereum, while creating a block, a miner executes all its transactions and stores the resulting cryptographic digest in it. Likewise, miners receiving a block execute all its transactions to verify its digest. Since block validation, block creation, and PoW mining are sequential, a large creation or validation time eats into PoW mining time, opening up the system to unfairness and attacks.

%
In \prot, instead of making miners
execute transactions and reporting a cryptographic digest of 
the updated state immediately, we delay the reporting of the digest by $\ths \ge2$ blocks.
In particular the resulting state after executing all 
transactions up to a block at height $i$, $B_i$ is included in 
a block at height $i+\ths$. Intuitively, this allows miners 
to process received blocks and create new blocks in parallel 
to the PoW mining phase. Thus larger $\tau$ does not eat into PoW time.


A large $\tau$, however, introduces new scenarios not encountered
in PoW systems with negligible $\tau$, such as those studied hitherto~\cite{pass2017analysis,kiffer2018better,ren2019analysis}.
In one scenario, an honest miner may not be able to validate the digests present in its longest known chain. This happens if the longest chain has $B_{i+\ths}$ as its last block and the miner has not yet executed transactions in $B_i$, either because subsequent blocks were generated quickly by honest miners, or were generated privately by an adversary and released all at once. As a remedy, in \prot, a miner mines not on the longest known chain, but on the longest chain it has so far validated.
Intuitively, we must choose a 
large $\ths$ to reduce such occurrences. However, a very large $\ths$ is undesirable as 
it delays the reporting of the updated state. Hence, we must pick a suitable $\ths$ to balance this trade-off.



In another scenario, a miner may be able to validate all blocks in a chain, and yet not have the state ready to put in the next block it wants to mine. For example, if the last block is $B_{i+\ths}$ and the miner has executed transactions in $B_i$ but not $B_{i+1}$, then it can indeed verify the state in $B_{i+\ths}$ but does not have the state to put in $B_{i+\ths+1}$. \prot\ remedies this situation by allowing miners to report an
empty state (e.g. all zeros) in the block they create. 

We show in our analysis~\S\ref{sec:analysis}, that with the above remedies, the mining of honest miners never stalls, independent of 
any adversarial strategy. We exploit this fact to  lower bound the guarantees of \prot\ with known 
guarantees of PoW based Nakamoto consensus~\cite{pass2017analysis,kiffer2018better,ren2019analysis}.

\subsection{Delayed Execution of Transactions}
\label{sub:delayed execution}
\begin{figure}[t!]
    \centering
    \includegraphics[width=0.80\linewidth]{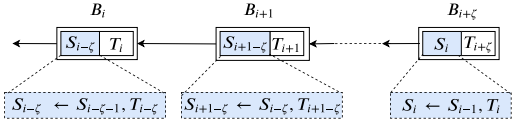}
    \caption{Structure of blocks with Delayed execution of 
    Transaction with $\ths$ blocks, where the state update 
    corresponding of transaction ordered list of a block 
    $B_{i}$ gets reported in the block $B_{i+\ths}$.}
    \label{fig:dxt blocks}
\end{figure}


The basic idea behind Delayed Execution of Transactions~(\dex) is 
to decouple the inclusion of 
transactions in the blockchain from the reporting (and hence 
validation) of the state resulting from those transactions. 
In \prot\, transactions are ordered in a block without being 
immediately validated, and the state resulting from them 
is reported $\ths$ blocks later. More formally, a block $B_i$ at 
height $i$ contains TOL $T_i$ and state 
$\st_{i-\ths}=\Pi(\st_{i-\ths-1},T_{i-\ths})$. Hence miners have 
a window of $\ths$ blocks to pre-compute the state required
for validation and this pre-computation can be done in parallel 
with the PoW. Figure~\ref{fig:validation compare} illustrates 
this for $\ths=2$ where the resulting states are delayed by 
2 blocks.

As explained in Case I in Section \ref{sec:background}, 
in existing blockchain designs,
up to $2\tau$ time can ``eat into'' the PoW time.
Thus, in order to get $\tau/\intv \approx 1$, 
we need to delay the execution of transactions by at least two blocks.
In the rest of this section, we first describe DET with $\ths=2$ and
then explain why an even larger $\ths$ is needed.

\vspace{0.5mm}
\noindent{\bf \dex\ with $\ths=2$.}
Let $B_{i-1}$ with state $\st_{i-3}$ and TOL $T_{i-1}$~(as $\ths=2$) 
be the latest valid block known to miner $n_a$~(see 
Figure~\ref{fig:validation compare}). For now, assume that $n_a$ 
has already computed 
(\romannumeral1) $\st'_{i-2}=\Pi(\st_{i-1},T_{i-2})$ and 
(\romannumeral2) $\st'_{i-1}=\Pi(\st_{i-2},T_{i-1})$ and cached 
them prior to the arrival of the block $B_{i}$. Here $\st'_{i-2}$
and $\st'_{i-1}$ are the states locally computed by $n_a$ for
TOL $T_{i-2}$ and $T_{i-1}$ respectively before arrival of block 
$B_{i}$~(ref. Figure~\ref{fig:validation compare}). Upon arrival 
of $B_{i}$, $n_a$ validates it by checking whether the reported 
$\st_{i-2}$ matches $\st'_{i-2}$~(step {\bf 1}). 
If it does, then $n_a$ accepts $B_i$ and starts computing 
$\Pi(\st'_{i-1},T_i)$~(step {\bf 3}). Simultaneously, 
$n_a$ picks a new TOL $T'_{i+1} \in \txp_{i-1}\setminus T_i$, 
creates the block $B'_{i+1}$ by fetching the precomputed 
state $\st'_{i-1}$ from its cache~(step {\bf 2}), and starts PoW 
for block $B'_{i+1}$~(step {\bf 4}).
This way, upon arrival of a block, $n_a$ is able to start PoW for 
the next block immediately.

\subsection{Handling variable block interarrival}
\label{sub:dxt challenges}
If blocks arrive exactly $\intv$ time apart from each other,
then $\ths=2$ will be sufficient to scale $\tau/\intv\approx1$. 
However, in reality, block interarrival times are random 
and can even be manipulated by the adversary to some extent. 
In case a sequence of $\ths$ blocks following
$B_i$ with TOL $T_i$ arrive closely spaced to each other, it is possible 
that a miner will not be able to compute the state $\st_i$ before
receiving $B_{i+\ths}$. Hence, the miner will not be able to 
immediately validate $B_{i+\ths}$. Without any precautionary measure,
in such a situation, miners will be forced to defer creation of the 
next block, and hence the PoW on it till it computes $\st_i$. 
If a large fraction of honest miners temporarily stop mining,
an adversary \adv with faster block processing power will effectively 
enjoy higher fraction of mining power and may even pull off the ``51\% attack'' during these periods. 
We address this issue by
making  two critical observations: 
{\em first,} the probability of this event occurring decreases with 
increasing $\ths$, and {\em second}, we can ask honest miners to mine
on the longest validated chain during such scenarios. 
We elaborate on these below.
%
%

\vspace{0.5mm}
\noindent{\bf Need for higher $\ths$.}
To see why increasing $\ths$ reduces the probability of the 
above mentioned undesirable event, we model DET as a queuing  
system where the transaction processing unit of a miner is 
analogous to the queue's server. Each arriving block is a task input to a queue  and each block is 
processed in $\tau$ units of time. In the 
absence of an adversary and network delays, the block arrival follows a 
Poisson process with rate $\alpha$.
Assuming the input rate is independent of the queue size, this is 
essentially an \md1 queue. The sequence of blocks that a miner is 
yet to process in a given chain represents the contents of the queue.

If an arriving block enters a queue of size less than or 
equal to $\ths-2$ then its own state as well as that of 
the subsequent block have been pre-computed. The probability 
of the queue exceeding $\ths-2$ is the probability of the 
miner missing the deadline for computing the state which that block must contain. 
This tail probability of the queue decreases with increasing 
$\ths$, thus making larger $\ths$ is more desirable. However,
there is a trade-off here, because a larger $\ths$ implies 
that blocks update the global state later, which is undesirable 
from a user's point of view. Hence $\ths$ must be chosen to 
balance this tradeoff. 
We leave detailed queuing models that take into consideration 
input variation of blocks due to $\Delta$, the presence of \adv, and the
trade-off due to larger $\ths$ to~\textsection\ref{sec:analysis}.



\noindent
{\bf Remark.} Due to forks, miners in \prot\ maintain multiple
queues, one for each forked chain (see Figure~\S\ref{fig:md1 queue}), 
and process them in parallel. Blocks which are common to multiple 
chain~(e.g. blocks $\{B_y,\cdots,B_i\}$ in Fig.~\ref{fig:md1 queue} 
need to be processed only once. 
All our analysis will be valid with multiple queues because, as assumed in \ref{sec:system},
a miner processes them in parallel and the input to each queue 
is still upper bounded by the block generation rate of miners.
We believe 
that the assumption of parallel processing of forks is a reasonable one because, in practice the 
number of simultaneous forks in them are quite
small~\cite{gencer2018decentralization} and is limited by the
block-generation capability of the adversary. Lastly, we envision 
that our system will be adopted by blockchains such as Bitcoin 
and Ethereum where block processing can be done using CPUs while
mining requires ASICs. Thus PoW mining and block processing do not compete for the same resources. 
\begin{figure}[!t]
    \centering
    \includegraphics[height=3cm, width=0.9\linewidth]{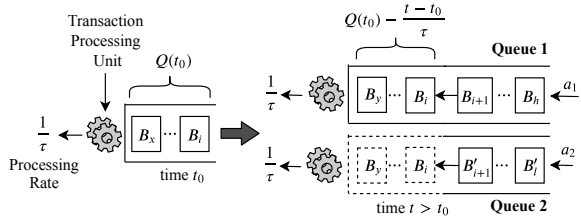}
    \caption{Queue(s) at a miner at time instant $t_0$ and $t>t_0$
    where the transaction processing unit at the miner process
    blocks in a queue at a rate $1/\tau$. In case of a fork,
    as shown on the right part of the Figure, i.e. $B'_{i+1}$ and 
    $B_{i+1}$ forking from $B_i$, miner maintains an additional
    queue for the new fork. The dashed part of the Queue 2 i.e.
    blocks $\{B_y\cdots B_i\}$ need not be processed explicitly 
    and results from processing these blocks from Queue 1 can 
    be directly used. Here $a_1$ and $a_2$ are arrival rates 
    to Queue 1 and Queue 2 respectively.}
    \label{fig:md1 queue}
\end{figure}

\noindent{\bf Extending Longest Validated Chain.}
Although higher $\ths$ lowers the probability of queue of an 
honest miner crossing $\ths-2$ in the absence of adversary,
additional care needs to be taken to provably prevent a 
Byzantine adversary from sabotaging the protocol. 
Thus we modify the chain selection rule of \prot\ from a standard 
longest chain selection procedure. Recall from~\ref{sub:overview}, 
the following changes are very crucial in lower bounding the chain
growth property of \prot\ which in turn is the core component
in the security analysis of any PoW based blockchain including 
\prot. Refer to~\textsection\ref{sec:analysis} for the detailed
security analysis of \prot.

Honest miners in \prot\ extend the {\em longest chain they can validate}. 
If a miner does not have the state to put in the next block, it puts a 
protocol specified default state, such as a sequence of zeros in place 
of the required state. Unless otherwise stated, we refer to this default 
state as the {\em empty state}. We refer to such blocks as {\em ES} 
blocks (i.e. blocks with an empty state). Similarly, we refer to blocks
with the non-empty state as {\em non-ES} blocks. ES blocks can contain
transactions (see~\textsection\ref{sec:implementation}) and non-ES block at
height $i$ report $\st_{i-\ths}$.  
Also, during the entire duration, honest miners continue to process
all unprocessed blocks with correct PoW that appears in a chain 
longer than the current mining head. On successfully  validating a block, a miner
re-configures its mining head to pick the new longest known validated
chain.

\subsection{Fees collection in \prot}
\label{sub:fee collection}
Every transaction in \prot\ specifies the maximum 
amount of computation resources needed for its execution. Based 
on this specification, the fee of every transaction in the $i^{\rm th}$
block, $B_i$ is collected in the same block. These fees are paid 
using the native token of \prot, $\toki$ (similar to Ether in 
Ethereum). Once the transaction gets executed, any leftover fees
i.e., fees of unused computational resources are refunded in 
$B_{i+\ths}$ where the state after the execution of $B_i$'s 
transactions is reported.\footnote{We leave the exact refund 
policy as a design choice as we primarily focus on the capability
of refunding fees if needed. \prot\ will work same for schemes 
that does not refund fees as well.} This ensures that only blocks
that can pay sufficient amount of gas, a unit of payment in 
Ethereum, for their transaction fees enter the blockchain. 

Note that additional care needs to set the minimum transaction fee paid by a transaction. In particular, we cannot 
levy a small fixed fee for every transaction as in Ethereum, 
as such a design can lead to under utilization in 
terms of actual amount of gas usage. In particular, malicious users may 
over-specify the amount of required computation resources and 
actually use only a tiny fraction of the specified resource. 
One way to discourage such behavior 
is to take fees 
equivalent to the minimum of a $\alpha$ ($0<\alpha\le 1$) 
times the specified gas usage and the true amount of 
gas used by the transaction. One many also consider 
alternative fee mechanisms depending upon the specific use 
cases. We leave the detailed analysis of the fee mechanism 
as  future research. 

Similar to Ethereum, \prot\ also allows its smart contracts 
to transfer and receive tokens. However, since the transactions of the
$i^{\rm th}$ block, $T_i$, are executed after fees are collected 
for $\ths-1$ future blocks, additional care needs to be taken to
prevent fees of future blocks from altering the execution results of
past transactions. Specifically, \prot\ restricts its smart 
contracts from using the native token. But, at the same 
time, \prot\ allows its contracts to create their own tokens 
reminiscent of ERC'20 tokens in Ethereum and use them during 
execution. These tokens could be contract-specific, shared
by several contracts, or shared by all contracts and is up 
to the contract designer. In our implementation, every smart
contract uses the single ERC'20 token which we refer to as
the $\tokii$. We describe the details of our implementation
in~\S\ref{sec:implementation}.

\section{Implementation Details}
\label{sec:implementation}
\begin{figure*}[!t]
    \centering
    \includegraphics[width=\textwidth]{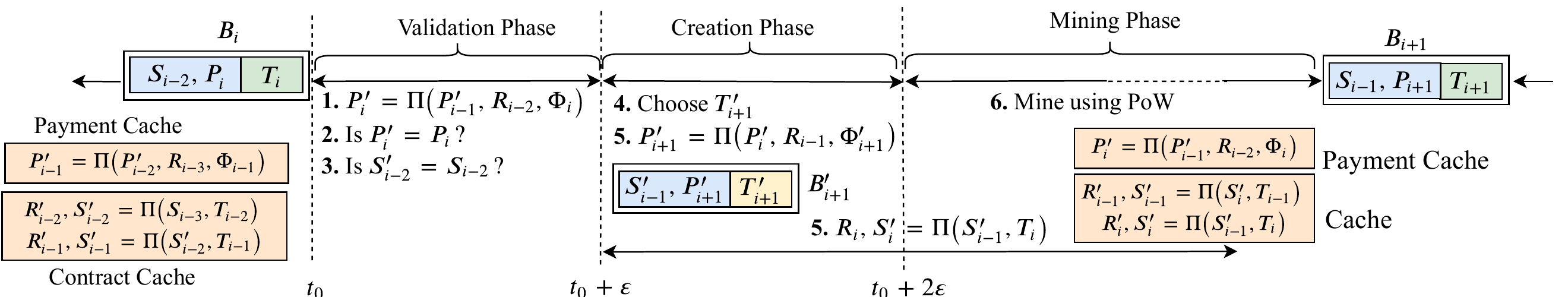}
    \caption{Steps performed at a miner to validate a block $B_i$ from 
    the network in \prot\ with $\ths=2$. After validation, miner 
    creates the next potential block $B'_{i+1}$ and starts PoW. During 
    PoW for block $B'_{i+1}$, the miner computes the state required to 
    instantly validate upcoming blocks.}
    \label{fig:tuxedo validation}
\end{figure*}
\noindent{\bf Accounts and Tokens.} Our implementation of \prot\ 
has two kinds of accounts: \user\ accounts and \contract\ accounts. 
Each \user\ account maintains both $\toki$, the native currency, 
and $\tokii$, the ERC'20 token to be used in contracts, whereas 
contract accounts only maintains $\tokii$. \contract\ accounts 
maintain executables that can be invoked by transactions.

\noindent{\bf Transactions.}
\label{para:transaction}
Each transaction $tx$ of \prot\ is a tuple containing 
$\{\type,\rcv,\snd,\gas,\star\}$ where $\type$ either takes the value
$1$ or $2$, $\rcv$ (resp. $\snd$) is the receiver (resp. sender)
address, $\gas$ specifies the maximum amount of gas $tx$ can use and
$\star$ represents the auxiliary information required for execution of
$tx$.  Transactions with $\typei$ are addressed to \user\ accounts and
transfers $\toki$ from the $\snd$ address to the $\rcv$ address. The amount of
$\toki$ transfer is present in the
auxiliary information denoted using $\star$. Transactions of $\typeii$
are addressed to \contract\ accounts and $\star$ contains the identity
of the functions to be invoked and the required function call parameters. For every such
transaction, miners in \prot\ transfer an amount of $\toki$, as a transaction fee from its
sender account to a prespecified address denoted by $\dep$. If in
case $tx.\snd$ does not have enough $\toki$, a miner discards $tx$.

\noindent{\bf Two States.}
\label{para:two states}
In our implementation of \prot\ every miner maintains two different
states: $\pst$ and $\st$ where $\pst$ is used to store information 
related to the amount of $\toki$ in each \user\ accounts and $\st$
is used to store the information regarding contract execution and 
amount of $\tokii$ in all the accounts. 
Since payment (including fees) and refund transactions only modify 
$\pst$, such a segregation enables faster validation and block
creation. 

\noindent{\bf Block Validation.}
Let $\st_{j-\ths}$ and $\pst_j$ be the contract and payment state at the
end of block $B_j$. Also, $R_{j-\ths}$ represents the refund 
processed after the execution of $T_{j-\ths}$. Let the latest block 
known to an honest miner $n_a$ be $B_{i-1}$. Let $B_i$ be the next
arriving block. Lets assume $B_i$ is a non-ES block and $n_a$ has 
already computed and cached the following state before its arrival.
\begin{align}
\pst'_{i-1} &=\Pi(\pst'_{i-2}, R_{i-\ths-1}, \Phi_{i-1}) \notag \\
R_{i-\ths}, \st'_{i-\ths} &=\Pi(\st'_{i-\ths-1}, T_{i-\ths}) \notag \\
R_{i-\ths+1}, \st'_{i-\ths+1} &=\Pi(\st'_{i-\ths}, T_{i-\ths+1}). \notag
\end{align}
Note that applying $\Pi$ on any TOL $T_j$ also 
outputs the ordered list of refund transactions corresponding to 
$T_j$. For $\ths=2$, this is depicted in Figure~\ref{fig:tuxedo validation}. 
On receiving the block $B_i$ containing TOL $T_i$ and digests of 
state $\pst_i$ and $\st_{i-\ths}$, $n_a$ validates $B_i$ as follows:
(\romannumeral1) $n_a$ first computes
$\pst'_i=\Pi(\pst'_{i-1},R_{i-\ths},\Phi_i)$,
(\romannumeral2) checks whether $\pst'_i$ matches with $\pst_i$, and
(\romannumeral3) $n_a$ also checks $\st'_{i-\ths}$ matches $\st_{i-\ths}$.

Alternatively, if $n_a$ has not pre-computed $\st'_{i-\ths}$ and $B_i$ is a 
non-ES block, $n_a$ continues to mine on the previous mining head till
it computes $\st'_{i-\ths}$ and re-starts validating $B_i$ as above. 
However, if $B_{i}$ is an ES-block and $n_a$ has already validated the latest
non-ES ancestor of $B_i$, $n_a$ computes $\pst'_{i}$ as 
$\Pi(\pst'_{i-1},\Phi_i)$ to check whether $\pst'_i$ matches with $\pst_i$ and skips step~(\romannumeral3) of validation. 
On successful validation, $n_a$ accepts $B_i$ and proceeds to create 
the next block as described below. Procedure {\sc Validate} in 
Algorithm~\ref{algo:tuxedo} presents the pseudo code for validation of 
blocks in \prot.

\noindent{\bf Block Creation.} 
\label{para:block creation}
On successful validation of the received block, to create the next
block $n_a$ 
(\romannumeral1) picks a new TOL $T'_{i+1}$ from $\txp \setminus T_i$,
(\romannumeral2) computes 
$\pst'_{i+1}=\Pi(\pst'_{i},R_{i-\ths+1},\Phi'_{i+1})$,
(\romannumeral2) fetches $\st'_{i-\ths+1}$ from cache (if available),
and (\romannumeral3) creates the next potential block $B'_{i+1}$ 
containing $T'_{i+1}$ and digests of $\pst'_{i+1}$ and $\st'_{i-\ths+1}$. 
Also, the first non-ES block after a sequence of ES-blocks applies
all accumulated refunds since the last non-ES block.
Alternatively, if $\st'_{i-\ths+1}$ is not available in the cache,
$n_a$ puts an {\em empty string} in place of $\st'_{i-\ths+1}$. 
After creating  $B'_{i+1}$, $n_a$ immediately starts PoW 
on $B'_{i+1}$. Procedure {\sc Create} in Algorithm~\ref{algo:tuxedo} 
presents the pseudo code for validation of blocks in \prot.

\noindent{\bf Execution of contract transactions}
\label{para:txn execution}
In \prot\, contract transactions are executed in parallel to PoW as
shown in Figure~\ref{fig:tuxedo validation}. Specifically, during PoW 
for $B'_{i+1}$, $n_a$ computes 
$R'_{i-\ths}, \st'_{i-\ths}=\Pi(\st'_{i-\ths-1}, T_{i-\ths})$. 
Also $n_a$ adds $T_{i}$ in the task 
queue~(ref.~\textsection\ref{sub:dxt challenges}) and executes $T_i$ 
as soon as it executes all $T_j$ for $j<i$, that appear prior to 
$T_{i}$.

\begin{algorithm}
    \small
    \caption{\prot\ for $\ths \ge 2$}\label{algo:tuxedo}
    \begin{algorithmic}[1]
        \State $\txp:\{tx_j|j=1,2,\cdots\}$ \Comment{Transaction pool at the miner}
        \State $\cc: \left\{R'_{j},\st'_{j}=\Pi(\st'_{j-1}, T_{j})\right\}$ \Comment{Contract Cache}
        \State $\pc: \left\{\pst'_{j}=\Pi(\pst'_{j-1}, R_{j-\ths}, \Phi_{j})\right\}$ \Comment{Payment Cache}
        \State $\queue:\{T_j|j=1,2,\cdots\}$ \Comment{TOL that are yet to be processed}
        \State \Call{ProcessTOL}{$\cdot$} \Comment{Non-blocking call to process existing TOL}
        \While{true}
            \State \Call{Reconfigure}{$B_k$} \Comment{On arrival of new block}
        \EndWhile
        \Procedure{Reconfigure}{$B_k$}              
            \If {\Call{Validate}{$B_k$}}
                \State {\bf stop} current PoW
                \State $B'_{k+1} \gets $ \Call{Create}{$B_k$}
                \State {\bf start} PoW on $B'_{k+1}$
            \EndIf
        \EndProcedure
        \Procedure{Validate}{$B_k$}
            \State $valid \gets$ {\bf false}; $\st_{k-\ths},\pst_k,T_k \gets B_k$
            \If {$\st_{k-\ths}$ is empty}
                \State  $\pst'_{k} \gets \Pi(\pst'_{k-1},\Phi_{k})$
                \If {$\pst'_k=\pst_k$}
                    \State $valid \gets$ {\bf true}
                \EndIf
            \Else
                \If {$\st'_{k-\ths}$ not in cache}
                    \State $valid \gets$ {\bf false}; add $T_k$ to $\queue$
                \Else
                    \State $\pst'_{k} \gets \Pi(\pst'_{k-1}, R'_{k-\ths},\Phi_{k})$
                    \If {$\pst'_k=\pst_k$ {\bf and} $\st'_{k-\ths}=\st_{k-\ths}$}
                        \State add $\pst_k$ to $\pc$; add $T_k$ to $\queue$; $\txp \gets \txp\setminus T_k$
                        \State $valid \gets$ {\bf true}
                    \EndIf
                \EndIf
            \EndIf            
            \State {\bf return} $valid$
        \EndProcedure
        \Procedure{Create}{$B_k$}
            \State $T'_{k+1} \gets$ subset of $\txp$
            \If {$\st'_{k+1-\ths}$ in cache}
                \State $\pst'_{k+1} \gets \Pi(\pst'_{k}, R'_{k+1-\ths},\Phi'_{k+1})$
                \State {\bf return} $(\st'_{k+1-\ths}, \pst'_{k+1}, T'_{k+1})$ 
            \Else
                \State $\pst'_{k+1} \gets \Pi(\pst'_{k},\Phi'_{k+1})$
                \State {\bf return} $(${\em empty-string}$, \pst'_{k+1}, T'_{k+1})$ 
            \EndIf
        \EndProcedure
        \Procedure{ProcessTOL}{$\cdot$}
            \While {true}
                \If {$\queue$ is non empty}
                    \State $B_{j} \gets$ next block in $\queue$
                    \State $T_j \gets$ TOL of $B_j$
                    \State $R_j,\st_{j} = \Pi(\st_{j-1},T_j)$
                    \State add $R_j,\st_{j}$ to $\cc$
                    \If {$j>$ current validated chain length}
                        \State Non-blocking \Call{Reconfigure}{$B_j$}
                    \EndIf
                \EndIf
            \EndWhile
        \EndProcedure
  \end{algorithmic}
\end{algorithm}

\section{Analysis}
\label{sec:analysis}
We analyze the security of \prot\ in the presence of a Byzantine
adversary under all possible adversarial strategies. Our analysis has the following outline. We start by modeling the blocks to be processed at an honest miner as a queuing system. 
Our queuing model captures the variability in the block inter-arrival 
time and the fact that honest miners might be processing old blocks 
as newer blocks continue to arrive. We then use our queuing 
analysis and the fact that honest miners are allowed to 
mine blocks with an empty state to illustrate that honest miners 
can extend blocks created by other honest miners even 
in a network with the worst possible latency. This implies that
the chain growth in \prot\ is greater than or equal to the chain
growth of PoW based Nakamoto consensus. This immediately 
implies that the lower bounds on chain-growth, chain quality 
and consistency shown
in~\cite{pass2017analysis,kiffer2018better,ren2019analysis} 
apply to \prot\ as well. Hence, \prot\ provides 
guarantees, equivalent to the known guarantees of existing 
PoW system.

\subsection{Block Processing as a Queuing System}
\label{sub:md1 model}
The arrival of blocks in PoW blockchain can be 
modeled as a Poisson process with arrival rate $\lambda$, or equivalently 
$1/\lambda$ is the expected inter-arrival time between 
two consecutive blocks~\cite{nakamoto2008bitcoin}. 
As all honest miners take $\tau$ units of time to process 
a block, i.e. the processing rate of the server is $1/\tau$. 
On arrival of every new block $B_i$ with TOL $T_i$ that
extends a chain longer than the current mining head at a
miner $n_a$, $n_a$ adds the block to its queue. 
$n_a$ processes (that is, validates) these blocks in First In 
First Out~(FIFO) order. As we have mentioned earlier, due
to forks, there will be multiple queues at each miner 
(see Figure~\ref{fig:md1 queue}), but our analysis applies
to any of them as arrival rate at each queue is dominated
by the arrival rate in a single queue setting and a miner processes
all queues in parallel.

Let $Q_a(t)$ denote the size of the queue of a miner $n_a$ at 
time $t$. If block $B_k$ enters the queue at time $t_k$, we use 
$Q_a(t_k^-)$ and $Q_a(t_k^+)$ to denote the size of queue 
immediately before and after time $t_k$ respectively. Note, 
$Q_a(t_k^+)=Q_a(t_k^-)+1$.

\vspace{0.5mm}
\noindent
{\bf Handling non-ES blocks.}
The ability miner $n_a$  to validate a received block $B_k$ and/or create
an non-ES block on it is directly related to the number of blocks in the
queue which $B_k$ enterss.
Notice that if $Q_a(t_k^+)>\ths$ then the head of the
queue contains TOL $T_i$ for $i\le k-\ths$, and the miner will not
be able immediately validate validate $B_k$. Similarly, when
$Q(t^+)=\ths$, the miner will be able to validate $B_k$ but
will not have the state to immediately mine a non-ES block on 
top of the received block.

\subsection{Reduction of \prot\ to Nakamoto PoW}
\label{sub:reduction}
In this section we illustrate that the {\em chain-growth} of \prot~(as defined in~\cite{pass2017analysis,kiffer2018better,ren2019analysis})
in a time interval $T$ is greater than or equal to the known chain
growth of PoW based Nakamoto consensus~\cite{pass2017analysis,kiffer2018better,ren2019analysis}.
As mentioned earlier, we later use this fact to prove security 
of \prot\ against all possible Byzantine adversaries.

Consider a time interval $[s,s+T]$. Select the honest blocks as
shown in Figure~\ref{fig:queue lemma}. First skip ahead to $s+\Delta$,
then find the next honest block, then skip by $\Delta$ and then
repeat till time $s+T-\Delta$. Let's call these blocks 
$B_1,B_2,\cdots,B_N$. Let $B_k$ be the honest block 
generated at time $t_k$.
\begin{figure}[t!]
\centering
\includegraphics[width=0.95\linewidth]{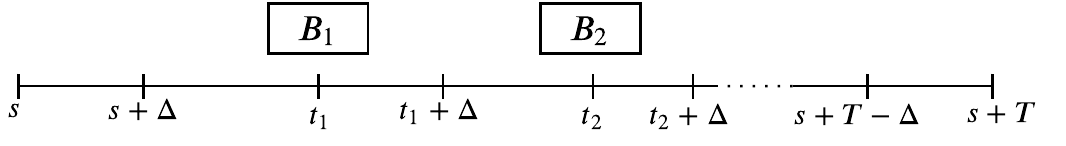}
\caption{Honest blocks chosen in time interval $[s+\Delta, s+T-\Delta]$
where the chosen blocks are separated by at least a gap of $\Delta$.}
\label{fig:queue lemma}
\end{figure}

\begin{lemma}
Let $B_k$ be the $m^{\rm th}$ block starting from genesis block 
in the chain containing $B_k$. Call these blocks 
$b_0, b_1,b_2,\cdots,b_m$ with $b_0$ as the genesis block.
Let $t'_0, t'_1,t'_2,\cdots,t'_m$ be the time when an honest 
miner hears the corresponding block for the first time.  
Then by $t_k+\Delta$ i.e. $t'_m+\Delta$, all honest miners 
would have processed the state required to validate $B_k$.
\label{lem:queue bound}
\end{lemma}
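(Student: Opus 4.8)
The plan is to set up a queueing argument that tracks, for an honest miner, how many blocks in the chain containing $B_k$ remain unprocessed at the moment $B_k$ is first heard, and then show this backlog never exceeds $\ths$ — which is exactly the condition under which the state needed to validate $B_k$ (namely $\st_{k-\ths}$) has been computed. First I would observe that the blocks $b_0,b_1,\ldots,b_m$ form a single chain, so at any honest miner they are processed FIFO in one queue (possibly a sub-queue shared with other forks, but by the \S\ref{sec:system} assumption the input rate to it is still dominated by the single-queue rate). I would let $Q(t)$ denote the number of these chain blocks the miner has heard but not yet finished processing at time $t$, and argue that once $b_j$ is heard it is processed within $\tau$ of the server becoming free. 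The key quantitative claim is that at time $t'_m$ (when $b_m = B_k$ is first heard), the miner has already processed all of $b_0,\ldots,b_{m-\ths}$, so by $t'_m+\tau$ it will have processed up through at least $b_{m-\ths}$'s successor if needed — but in fact validating $B_k$ only needs $\st_{k-\ths}=\Pi(\st_{k-\ths-1},T_{k-\ths})$, i.e. $b_{m-\ths}$ must be fully processed, which I claim happens by $t'_m+\Delta$ at every honest miner.

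The heart of the argument is a coupling between the arrival process of the chosen blocks $B_1,\ldots,B_N$ (spaced $\ge\Delta$ apart by construction, from Figure~\ref{fig:queue lemma}) and the honest block generation process, together with the $\md1$ tail bound sketched in \S\ref{sub:md1 model}. The $\Delta$-spacing is what lets me assume that between consecutive chosen blocks every honest miner has heard the earlier one, so the "arrival" seen by the queue is well-defined and the recursion on $Q$ closes. I would then bound $Q(t'_m{}^+)$: because blocks arrive (into this chain) no faster than the honest mining rate $\alpha$ and are served at rate $1/\tau$, and because an adversary releasing a long private segment all at once can push the backlog up but — crucially — the honest miner keeps processing unprocessed blocks of any longer chain (per the "Extending Longest Validated Chain" rule in \S\ref{sub:dxt challenges}), the backlog relevant to $B_k$ at time $t'_m$ is at most $\ths$. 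I would make this precise by induction down the chain $b_m, b_{m-1},\ldots$: either $b_{m-1}$ was heard at least $\tau$ before $b_m$ (so one more block got served), or it was heard within $\tau$, in which case I charge it to the queue and recurse; the $\ths$-block window is consumed exactly when $\ths$ blocks pile up inside one $\ths$-fold service interval, and the Poisson/adversarial arrival bound controls that.

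The main obstacle I expect is handling the adversary cleanly. A Byzantine \adv can withhold blocks and dump a burst, can feed honest miners an ES-heavy chain, and can choose $B_k$'s chain to be one the honest miner only recently switched to — so the naive "$\md1$ queue in steady state" picture is not directly valid. The fix is to exploit that the claim is purely about $b_0,\ldots,b_m$ being a \emph{fixed} chain and about the $\Delta$-delayed honest block $B_k$: even if honest miners only started processing this chain late, the rule that they always process unprocessed blocks of the longest known chain means the processing of $b_0,\ldots,b_m$ proceeds at rate $1/\tau$ from the moment the chain becomes longest, and the number of blocks of that chain added after any given point is bounded by total block generation (honest + adversarial) in that window — but over the relevant window the adversary cannot have added more than $\ths$ such blocks without the chain-growth bound from \cite{pass2017analysis,ren2019analysis} being violated, since honest blocks $B_1,\ldots,B_N$ are already $\Delta$-separated and counted. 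So the plan is: (i) formalize $Q(t)$ for the fixed chain; (ii) prove the backlog-$\le\ths$ invariant by induction along the chain using the service rate and the $\Delta$-gap; (iii) translate backlog $\le\ths$ into "$\st_{k-\ths}$ computed"; (iv) conclude this holds at \emph{every} honest miner by $t_k+\Delta$ since all honest miners hear $b_m$ by $t'_m+\Delta$ and run the same processing loop. I would defer the detailed burst/ES-block bookkeeping to the adversarial-strategies part of \S\ref{sec:analysis}.
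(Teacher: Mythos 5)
There is a genuine gap: your central step (ii) --- an unconditional invariant that the backlog of unprocessed chain blocks at every honest miner is at most $\ths$ when $B_k$ arrives --- is false under a Byzantine adversary, and the lemma cannot be proved that way. An adversary can withhold a private segment and release it in a burst, pushing honest queues above $\ths$; the paper itself only bounds ${\rm Pr}[Q(t)\ge\ths]$ away from zero (\S\ref{sub:zeta choice}) and introduces ES blocks precisely because this event can occur. Your proposed rescue --- that the adversary ``cannot have added more than $\ths$ such blocks without the chain-growth bound being violated'' --- is both circular (chain growth for \prot\ is derived \emph{from} this lemma in Lemma~\ref{lem:height diff} and Lemma~\ref{lem:chain inc}, so you cannot assume it here) and of the wrong type: it would at best give a high-probability statement, whereas Lemma~\ref{lem:queue bound} is a deterministic claim that must hold for every honest block and every adversarial strategy.

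The paper's proof avoids any absolute queue bound by proving a \emph{relative} statement: whatever state the creator of $B_k$ had at mining time $t_k$, every honest miner has within $\Delta$. Concretely, it introduces two hypothetical FIFO queues with service rate $1/\tau$, $Q_L$ fed by $b_1,\dots,b_m$ at times $t'_1,\dots,t'_m$ and $Q_U$ fed at $t'_1+\Delta,\dots,t'_m+\Delta$, sandwiches the position of $B_k$ in any honest miner's queue between them, and uses the exact time-shift identity $Q^{(k)}_U(t_k+\Delta)=Q^{(k)}_L(t_k)$ together with $Q^{(k)}_L(t_k)\le Q^{(k)}_{B_k}(t_k)$ to conclude that by $t_k+\Delta$ the position of $B_k$ at every honest miner is at most its position at its creator at $t_k$. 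If $B_k$ is non-ES, the creator's position was at most $\ths$, so every honest miner can validate it by $t_k+\Delta$; if $B_k$ is an ES block, no delayed contract state is needed at all. Your write-up also misses this ES/non-ES conditioning: ``the state required to validate $B_k$'' is determined by what its creator put into it, which is exactly why the comparison to the creator's queue, rather than an absolute $\ths$ bound, is the key idea.
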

\begin{proof}
Let $Q_L$ and $Q_U$ be two hypothetical FIFO
queues with constant service rate $1/\tau$ in which blocks
$b_1,b_2,\cdots,b_m$ enter at $t'_1,t'_2,\cdots,t'_m$ and 
$t'_1+\Delta,t'_2+\Delta,\cdots, t'_m+\Delta$ respectively. 
Let $Q^{(k)}_a(t)$ be the position of the block $B_k$ (i.e. $b_m$) 
at the queue of miner $n_a$ at time $t$, and $Q_{B_k}(t)$ be its position at the miner which created it. Then the following two 
conditions hold:
\begin{align}
Q^{(k)}_L(t) &\le Q^{(k)}_a(t), \forall t \text{ after $B_k$ enters $Q_a$} \\
Q^{(k)}_a(t) &\le Q^{(k)}_U(t), \forall t \ge t_k+\Delta
\end{align}
Hence,
\begin{equation}
Q^{(k)}_a(t_k+\Delta) \le Q^{(k)}_U(t_k + \Delta) = Q^{(k)}_L(t_k) \le Q^{(k)}_{B_k}(t_k)
\label{eq:hst queue bound}
\end{equation}
Equation~(\ref{eq:hst queue bound}) implies that by time $t_k+\Delta$, the position of $B_k$ in the queue of
all honest miners will be less than or equal to the position of $B_k$ in the queue its creator at time $t_k$.
Hence if
 $Q_{B_k}(t_k^+) \le \ths$, i.e.  $B_k$ is a non-ES block, by
time $t_k+\Delta$, the position $B_k$ will be lower than or equal
to $\ths$ at all honest miners and hence, all honest miners will be
able to validate $B_k$ by time $t_k+\Delta$.
\end{proof}

Lemma~\ref{lem:queue bound} implies that whenever an honest miner
generates a block, $\Delta$ time  after the block 
generation time, 
every honest miner will have the state required
to validate the block generated by the honest miner. Hence,  
$\Delta$ time after generation of an honest block $B_k$, 
every 
honest miner extends a block which is at a height greater than or 
equal to the height of $B_k$. 
We use this argument in  Lemma~\ref{lem:height diff}
to prove that whenever two honest miners generate blocks at time 
instants that are at least $\Delta$ apart, the height of the 
latter block is greater than the height of the former. 
As a result, it is easy to see that blocks
$B_1,B_2,\cdots,B_N$ that we consider  have a strictly increasing height.

\begin{lemma}
Let $\ell(B_k)$ denote the length of the block starting from
genesis block $b_0$ with $\ell(b_0)=0$. Then $B_i$ for all 
$i\in [1,N]$ which were mined between $[s+\Delta, s+T-\Delta]$ 
as shown in Figure~\ref{fig:queue lemma} have distinct length. 
Further, $\ell(B_i) > \ell(B_j), \forall i>j$.
\label{lem:height diff}
\end{lemma}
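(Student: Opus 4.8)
The plan is to prove Lemma~\ref{lem:height diff} by induction on the index $i$, using Lemma~\ref{lem:queue bound} as the key engine. The intuition is that the blocks $B_1, B_2, \ldots, B_N$ were chosen so that consecutive ones are separated by more than $\Delta$ in generation time; combined with the fact (from Lemma~\ref{lem:queue bound}) that $\Delta$ after an honest block $B_k$ is generated every honest miner can validate it, the miner of $B_{i+1}$ must already be mining on a validated chain of height at least $\ell(B_i)$, forcing $\ell(B_{i+1}) \ge \ell(B_i)+1$.

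First I would establish the base case, which is essentially vacuous or trivial: $\ell(B_1) \ge 1$ since $B_1$ extends some chain rooted at the genesis block. Then, for the inductive step, I would assume $\ell(B_j) > \ell(B_{j'})$ for all $j > j'$ with $j \le i$, and in particular that $\ell(B_i) \ge i$. Let $t_i$ and $t_{i+1}$ be the generation times of $B_i$ and $B_{i+1}$; by the selection procedure in Figure~\ref{fig:queue lemma}, $t_{i+1} \ge t_i + \Delta$. By Lemma~\ref{lem:queue bound}, by time $t_i + \Delta \le t_{i+1}$ every honest miner --- in particular the miner $n_a$ who generates $B_{i+1}$ --- has processed the state required to validate $B_i$, so $n_a$ has a validated chain of height at least $\ell(B_i)$ available to it at time $t_{i+1}$. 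Since honest miners in \prot\ always extend their longest validated chain, $B_{i+1}$ is mined on top of a chain of height at least $\ell(B_i)$, hence $\ell(B_{i+1}) \ge \ell(B_i) + 1 > \ell(B_i)$. Combined with the inductive hypothesis (transitivity of the height ordering), this gives $\ell(B_{i+1}) > \ell(B_j)$ for all $j \le i$, completing the induction.

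The one subtlety I would need to be careful about --- and I expect this to be the main obstacle --- is the precise statement of Lemma~\ref{lem:queue bound}: it guarantees that every honest miner can \emph{validate} $B_k$ by $t_k + \Delta$, but to conclude that $n_a$ mines $B_{i+1}$ on a chain of height $\ge \ell(B_i)$ I must also argue that $B_i$ (or some block of height at least $\ell(B_i)$) is actually part of $n_a$'s view of the longest validated chain at time $t_{i+1}$. This follows because $B_i$ was broadcast at $t_i$, so by $t_i + \Delta$ it has reached $n_a$, and by Lemma~\ref{lem:queue bound} it is validated by then; any longer validated chain $n_a$ might hold only helps. I would state this chain-delivery-plus-validation observation explicitly (it is essentially the sentence between Lemma~\ref{lem:queue bound} and Lemma~\ref{lem:height diff} in the text: ``$\Delta$ time after generation of an honest block $B_k$, every honest miner extends a block which is at a height greater than or equal to the height of $B_k$''), and then the inductive argument goes through cleanly.
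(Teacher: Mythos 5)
Your proposal is correct and is essentially the paper's own argument: for consecutive blocks $B_k, B_{k+1}$, the gap of more than $\Delta$ plus Lemma~\ref{lem:queue bound} ensures the miner of $B_{k+1}$ has both received and validated $B_k$ before mining, so the longest-validated-chain rule forces $\ell(B_{k+1})>\ell(B_k)$, and the full ordering follows by chaining these inequalities (the paper phrases this as transitivity over consecutive pairs rather than an explicit induction, which is the same argument). Your explicit handling of the delivery-plus-validation subtlety matches the paper's remark that $n_{k+1}$ ``would have heard of $B_k$'' by $t_k+\Delta$.
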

\begin{proof}
Consider two consecutive blocks $B_k, B_{k+1}$ mined at time
$t_k, t_{k+1}$ respectively (need not be part of the same chain). 
Let $n_{k}, n_{k+1}$ be the miners of $B_k,B_{k+1}$ respectively. 
Since $t_{k+1} > t_k+\Delta$, $n_{k+1}$ would have heard of $B_{k}$ prior
mining $B_{k+1}$. Also, from Lemma~\ref{lem:queue bound} by time
$t_k+\Delta$, $n_{k+1}$ will have the state to validate $B_{k}$.
Thus from time $t_k+\Delta$ onwards, $n_{k+1}$ either will extend 
$B_{k}$ or any other validated block with same or greater length 
than $B_{k}$. This implies $\ell(B_{k+1})>\ell(B_{k})$. This is true
for all pair of consecutive blocks and hence by transitivity of
length comparison, we get $\ell(B_i) > \ell(B_j), \forall i>j$.
\end{proof}

Next we will use the Lemma~\ref{lem:queue bound} and 
Lemma~\ref{lem:height diff} show that during an interval 
of size $T$, the height of the blockchain at every honest 
miner grows by at least $N$.

\begin{lemma} {\em (Chain Increase)}
Let $L_j(t)$ be the length of the longest validated chain at miner
$n_j$ at time t. 
Let $L_{\rm min}(t), L_{\rm max}(t)$ be the minimum and maximum of 
chain lengths of all honest miners at any time $t$, i.e.
$L_{\rm min}(t) = \min_{j}\{L_j(t)\}$ and $L_{\rm max}(t)=\max_{j}\{L_j(t)\}$,
then in the scenario show in Figure~\ref{fig:queue lemma}, 
chain length of all honest miners grows by at least $N$ blocks, 
i.e $L_{\rm min}(s+T) \ge L_{\rm max}(s) + N$.
\label{lem:chain inc}
\end{lemma}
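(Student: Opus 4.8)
The plan is to combine the two previous lemmas to show that each of the $N$ selected honest blocks $B_1,\dots,B_N$ forces a unit increase in the minimum honest chain length by a time that is safely inside the window $[s,s+T]$. The key structural facts are: (i) by Lemma~\ref{lem:height diff}, the heights $\ell(B_1) < \ell(B_2) < \cdots < \ell(B_N)$ are strictly increasing, so these $N$ blocks sit at $N$ distinct heights; and (ii) by Lemma~\ref{lem:queue bound}, for each $k$, by time $t_k+\Delta$ every honest miner has processed the state needed to validate $B_k$, and hence every honest miner's longest validated chain has length at least $\ell(B_k)$ from that time onward (since a validated block of height $\ell(B_k)$ is now available to every honest miner, and honest miners always hold a validated chain at least as long as the longest validated block they have seen).

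First I would fix the construction of Figure~\ref{fig:queue lemma}: $B_1$ is the first honest block after $s+\Delta$, and each subsequent $B_{j+1}$ is the first honest block appearing at least $\Delta$ after $t_j$, with the process stopping before $s+T-\Delta$. Thus all $t_k \in [s+\Delta, s+T-\Delta]$, which guarantees $t_k+\Delta \le s+T$, so the validation deadline from Lemma~\ref{lem:queue bound} falls within the interval. Next I would establish the lower bound at the left endpoint: at time $s$, $L_{\rm max}(s)$ is the longest validated chain held by \emph{any} honest miner; call a witnessing chain $C$. Since honest blocks extend the longest validated chain, and since $B_1$ is generated after $s+\Delta$ by an honest miner who by then has validated a chain of length at least $L_{\rm max}(s)$ (all of $C$ propagates within $\Delta$ and is validated — here one may need to invoke Lemma~\ref{lem:queue bound} applied to the honest blocks composing $C$, or simply the propagation bound for already-validated blocks), we get $\ell(B_1) \ge L_{\rm max}(s)+1$. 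Combined with the strict monotonicity from Lemma~\ref{lem:height diff}, $\ell(B_k) \ge L_{\rm max}(s) + k$ for every $k \in [1,N]$.

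The final step is to evaluate at the right endpoint: by time $t_N + \Delta \le s+T$, Lemma~\ref{lem:queue bound} guarantees every honest miner has validated a chain of length at least $\ell(B_N) \ge L_{\rm max}(s)+N$. Therefore $L_{\rm min}(s+T) \ge L_{\rm max}(s)+N$, which is the claim. The main obstacle I anticipate is the careful handling of the left-endpoint bound $\ell(B_1) \ge L_{\rm max}(s)+1$: one must argue that the witnessing chain $C$ of length $L_{\rm max}(s)$ is not merely \emph{heard} but \emph{validated} by the honest miner of $B_1$ before $t_1$. This requires that all the blocks of $C$ — which were accepted (hence validated) by some honest miner no later than $s$ — have their required states processed by every honest miner by $s+\Delta$; this is exactly the content of Lemma~\ref{lem:queue bound} applied recursively along $C$ (or, for adversarially-created blocks in $C$, the fact that an honest miner only extends a chain it has validated, so a validated-by-$s$ chain propagates and re-validates within $\Delta$). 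Pinning down this subtlety cleanly, rather than the arithmetic of heights, is where the proof needs the most care.
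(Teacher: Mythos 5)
Your proof is correct and takes essentially the same route as the paper's: it telescopes the strict height increase from Lemma~\ref{lem:height diff} with the $\Delta$-delayed validation guarantee of Lemma~\ref{lem:queue bound}, anchored at the left endpoint by $L_{\rm min}(s+\Delta)\ge L_{\rm max}(s)$ and evaluated at $t_N+\Delta\le s+T$ (the paper tracks the creators' chain lengths $L_{B_k}(t_k^+)$ rather than the block heights $\ell(B_k)$, but these coincide). The left-endpoint subtlety you flag — that the chain witnessing $L_{\rm max}(s)$ must be \emph{validated}, not merely heard, by every honest miner by $s+\Delta$ — is indeed glossed in the paper, which simply asserts propagation by $s+\Delta$, so your extra care is warranted but does not alter the argument.
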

\begin{proof}
From Lemma~\ref{lem:height diff}, $\ell(B_{i+1})\ge \ell(B_{i})+1, \forall\ i\in[N-1]$.
Let $L_{B_{k}}(t_k^+)$ be the length of the longest chain of miner of block $B_{k}$ at
time $t_k^+$. Then from Lemma~\ref{lem:height diff} we know,
\begin{equation}
L_{B_{k+1}}(t_{k+1}^+) \ge L_{B_{k}}(t_k^+) + 1
\label{eq:length relation}
\end{equation}
From Lemma~\ref{lem:queue bound}, 
\begin{equation}
L_{\rm min}(t_k^++\Delta) \ge L_{B_{k}}(t_k^+) \notag
\end{equation}
Also, $L_{\rm min}(s+\Delta) \ge L_{\rm max}(s)$ as all blocks
generated before $s$ reaches every honest miner by 
time~$s+\Delta$. Hence,
\begin{align}
L_{\rm min}(s+T) &\ge L_{\rm min}(t_N+\Delta) \notag \\
& \ge L_{\rm min}(s+\Delta) + N \notag \ge L_{\rm max}(s) + N \notag \qedhere
\end{align}
\end{proof}

Notice that the scenario shown in Figure~\ref{fig:queue lemma}
is stochastically identical to the \hyb\ experiment shown by Pass et al. 
in~\cite{pass2017analysis}. Hence the chain growth and chain 
quality of \prot\ is identical to results presented 
in~\cite{pass2017analysis}. Also, it is easy to see that 
$B_1,B_2,\cdots,B_N$ includes all {\em loners} defined 
in~\cite{ren2019analysis} 
(or {\em convergence opportunities} defined in~\cite{pass2017analysis}) 
and hence the safety results
from~\cite{ren2019analysis} holds for \prot. For completeness
we state the theorem here.

\begin{theorem}
{\em (Safety (Thm. 8 in~\cite{ren2019analysis}))}
Let $B^*$ and $B^{**}$ be two distinct blocks at the same height. If 
$e^{-2\alpha\Delta}\alpha >(1+\delta)\beta$, then once an honest 
node adopts a chain that buries $B^*$ by $k$ blocks deep, no honest 
node will adopt a chain that buries $B^{**}$ by $k$ blocks, except 
for~$e^{-\Omega(\delta^2k)}$ probability. 
\label{thm:consistency}
\end{theorem}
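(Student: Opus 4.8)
The plan is to obtain this statement as an immediate corollary of the reduction developed in \S\ref{sub:reduction}, rather than re-deriving a safety argument from scratch. First I would note that Lemma~\ref{lem:chain inc}, together with the observation that the block-selection process of Figure~\ref{fig:queue lemma} is stochastically identical to the \hyb\ experiment of Pass et al.~\cite{pass2017analysis}, shows that the joint law of (honest block times, adversarial block times, and the induced lower bound on chain growth) in \prot\ dominates the corresponding object for longest-chain Nakamoto consensus. In particular the selected blocks $B_1,\ldots,B_N$ include every \emph{loner} in the sense of~\cite{ren2019analysis} (equivalently every \emph{convergence opportunity} of~\cite{pass2017analysis}), and by Lemma~\ref{lem:queue bound} each such block $B_k$ has the extra property that by time $t_k+\Delta$ every honest miner has \emph{validated} a chain of height at least $\ell(B_k)$.

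Second, I would verify the only genuinely \prot-specific point: that a loner still plays the role it plays in vanilla Nakamoto consensus, namely forcing convergence of honest miners, despite delayed execution and the presence of ES blocks. Fix a loner $B_k$ mined at $t_k$. Its miner has heard all honest blocks produced before $t_k-\Delta$, and by Lemma~\ref{lem:queue bound} has validated all of them, so since it extends its longest \emph{validated} chain the height of $B_k$ strictly exceeds that of every honest block produced before $t_k$ (there are none in $(t_k-\Delta,t_k)$ by the loner condition). By $t_k+\Delta$ every honest miner has validated a chain of height $\ge\ell(B_k)$ (Lemma~\ref{lem:queue bound} again), and no competing honest block is produced in $(t_k,t_k+\Delta)$; hence all honest mining from $t_k+\Delta$ onward proceeds on validated chains of height $\ge\ell(B_k)$, which is exactly the convergence-opportunity property. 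ES blocks do not disturb this: an ES block still carries a valid PoW and still contributes a unit of chain height, and an honest miner's extension decision depends only on validated length, so the counting is unchanged.

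Third, with loners behaving as convergence opportunities, with the adversary still generating blocks at rate $\beta$, and with the effective rate of loners governed by $\alpha$ and $\Delta$ exactly as in~\cite{ren2019analysis}, the remainder of the proof is a verbatim transcription of the argument there: under $e^{-2\alpha\Delta}\alpha>(1+\delta)\beta$ the number of loners in any sufficiently long window exceeds the number of adversarial blocks by a constant multiplicative margin except with probability exponentially small in the window length, which rules out the adversary ever presenting an honest node with a chain that buries $B^{**}$ by $k$ blocks once another honest node has buried $B^{*}$ by $k$ blocks; the typical-execution / Chernoff bounds of~\cite{ren2019analysis} then deliver the stated $e^{-\Omega(\delta^2k)}$ failure probability.

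The main obstacle is the second step: confirming that replacing ``longest chain'' by ``longest \emph{validated} chain'' and permitting ES blocks does not weaken the convergence guarantee a loner provides. Everything else is either inherited stochastic dominance or a black-box appeal to~\cite{ren2019analysis}. Once Lemma~\ref{lem:queue bound} is available this step is essentially bookkeeping, but it is the crux of why the reduction is sound, so I would isolate it as an explicit sub-lemma (``every loner is a convergence opportunity in \prot'') before invoking Theorem~8 of~\cite{ren2019analysis}.
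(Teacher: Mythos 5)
Your proposal is correct and follows essentially the same route as the paper, which proves this theorem by noting that Lemma~\ref{lem:chain inc} (together with the observation that the selected blocks $B_1,\ldots,B_N$ include all loners/convergence opportunities and that the scenario matches the \hyb\ experiment of Pass et al.) lets the safety argument of Theorem~8 in~\cite{ren2019analysis} be invoked as a black box. Your explicit sub-step checking that ``longest validated chain'' plus ES blocks preserves the convergence-opportunity property is exactly the content the paper leaves implicit in Lemma~\ref{lem:queue bound} and the surrounding discussion, so it is a faithful (and somewhat more detailed) rendering of the paper's own proof.
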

\begin{proof}
Directly follows from Lemma~\ref{lem:chain inc} and proof of 
Theorem~8 of~\cite{ren2019analysis}.
\end{proof}


\subsection{Choice of $\ths$}
\label{sub:zeta choice}
As $Q(t)\ge \ths$ at a honest miner implies that the miner will not
be able to validate a received non-ES block, we compute an upper
bound on ${\rm Pr}[Q(t)\ge \ths]$ under all possible 
adversarial strategies after making certain approximations. Recall that $Q(t)\ge\ths$
do not violate security of \prot\ and hence the guarantees provided
in previous section still holds true.

A well-known result from queuing theory~\cite{norros1994storage} is that 
in any queuing system with constant service rate $1/\tau$, the size 
of the queue at any time $t$ is given by:
\begin{equation}
Q(t) = \sup_{s}\left\{A(s) - \frac{s}{\tau}\right\},
\label{eq:sup queue size}
\end{equation}
where $A(s)$ is the number of arrivals during the interval $[t-s,t]$.
In addition to the number of blocks generated during the time interval
$[t-s,t]$, $A(s)$ may also include honest blocks from time interval
$[t-s-\Delta, t-s]$ as these blocks might be delayed due to network. 
Furthermore, an adversary can deliberately withhold blocks mined prior
to time $t-s$ and release them during $[t-s,t]$. However, as demonstrated
in~\cite{pass2017analysis} that to withhold a block by longer than 
time $t_w$, adversary needs to generate a private chain longer than
honest chain during that time. 

Approximating the growth rate of the honest miners as a Poisson process
with rate $\gamma$ where $\gamma=\alpha/(1+\Delta\alpha)$, we can approximate 
the race between honest chain and the adversarial chain for time $t_w$ as a 
Skellam Distribution~\cite{skellam1946frequency} with $\mu_1=\beta t_w$
and $\mu_2=\gamma t_w$. Specifically, let $N(t_w), X_{\advi}(t_w)$ be the 
random variables denoting the chain growth of honest miner and number of
blocks generated by \adv during a time interval of size $t_w$ 
respectively. Then the success probability of \adv withholding a block
for longer than $t_w$ is ${\rm Pr}[X_{\advi}(t_w)-N(t_w) > 0]$. Since,
$X_{\advi}(t_w)$ and $N(t_w)$ are independent Poisson random variable,
$X_{\advi}(t_w)-N(t_w)$ follows a Skellam distribution with mean $\mu_1$
and $\mu_2$ as mentioned above.

Using results from Skellam distribution, given a small threshold $\eta$, we pick a value of $t^*$ such that
\begin{equation}
    {\rm Pr}[{X_\advi}(t^*)-Y(t^*)] \le \eta
    \label{eq:block withhold}
\end{equation}
and assume that \adv\ is not allowed to
withhold a block for more than $t^*$ units of time. Under this assumption,
we next upper bound the probability that queue of an honest miner will 
exceed any given $\ths$ under all possible adversarial strategies.

\begin{theorem}
For any given $\epsilon_0,\epsilon_1,t^*$, let
$s_0=\max\{\frac{\Delta}{\epsilon_0}, \frac{t^*}{\epsilon_1}\}$
and $\bar{\lambda} = (1+\epsilon_0)\alpha + (1+\epsilon_1)\beta.$ 
Let $Q(t)$ be the size of an honest miner's queue at time $t$. Then
\begin{equation}
{\rm Pr}[Q(t) \ge \ths] \le \sum_{i=\ths}^{\infty}\pi_{i} + 1-\sum_{i=0}^{\ths-1}\frac{\bar{\lambda}^ie^{-\bar{\lambda}s_0}}{i!},
\label{eq:zeta bound}
\end{equation}
where $\pi_{i}$ is the stationary distribution of \md1 queue
with arrival rate $\bar{\lambda}$.
\label{thm:queue size}
\end{theorem}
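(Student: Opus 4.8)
The plan is to bound $\mathrm{Pr}[Q(t) \ge \ths]$ by splitting the supremum in Equation~(\ref{eq:sup queue size}) at the threshold $s_0$, handling the ``long lookback'' ($s \ge s_0$) with a stationary-queue argument and the ``short lookback'' ($s < s_0$) with a crude Poisson tail bound. Concretely, write
\begin{equation}
Q(t) = \sup_{s}\left\{A(s) - \frac{s}{\tau}\right\} = \max\left\{\ \sup_{s \ge s_0}\Bigl\{A(s) - \tfrac{s}{\tau}\Bigr\},\ \sup_{s < s_0}\Bigl\{A(s) - \tfrac{s}{\tau}\Bigr\}\ \right\},
\label{eq:split}
\end{equation}
so that by a union bound $\mathrm{Pr}[Q(t)\ge\ths]$ is at most the sum of $\mathrm{Pr}[\sup_{s\ge s_0}\{A(s)-s/\tau\}\ge\ths]$ and $\mathrm{Pr}[\sup_{s<s_0}\{A(s)-s/\tau\}\ge\ths]$. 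The first term will give the $\sum_{i=\ths}^\infty \pi_i$ contribution, the second the $1-\sum_{i=0}^{\ths-1}\bar\lambda^i e^{-\bar\lambda s_0}/i!$ contribution.

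The first key step is to argue that, for the purposes of the arrival count $A(s)$, the effective arrival rate can be taken to be $\bar\lambda = (1+\epsilon_0)\alpha + (1+\epsilon_1)\beta$. Here I would invoke the structure already set up in the excerpt: the honest arrivals into the queue over a window of length $s$ are the blocks generated in $[t-s,t]$ plus at most the honest blocks from $[t-s-\Delta,t-s]$ delayed by the network, and the adversarial contribution is the blocks \adv generates in $[t-s,t]$ plus blocks withheld from before $t-s$ — but by the choice of $t^*$ via Equation~(\ref{eq:block withhold}) we assume \adv cannot withhold longer than $t^*$. So the honest ``overhang'' is bounded by a Poisson count over a window of length $\Delta$ and the adversarial overhang by a Poisson count over a window of length $t^*$. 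For $s \ge s_0 \ge \max\{\Delta/\epsilon_0, t^*/\epsilon_1\}$, these overhangs are at most an $\epsilon_0$ (resp.\ $\epsilon_1$) fraction of the main window, so in expectation $A(s)$ behaves like a Poisson process of rate $\bar\lambda$; the queue $Q_{\bar\lambda}$ driven by a genuine Poisson-$\bar\lambda$ arrival stream and deterministic service $1/\tau$ stochastically dominates the relevant part of $Q(t)$. Then $\mathrm{Pr}[\sup_{s\ge s_0}\{A(s)-s/\tau\}\ge\ths] \le \mathrm{Pr}[Q_{\bar\lambda}\ge\ths] = \sum_{i=\ths}^\infty \pi_i$, the stationary \md1 tail (using that the queue has reached stationarity / taking the worst case over $t$). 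The second key step is the short-lookback term: for $s<s_0$ we simply drop the $-s/\tau$ and bound $\mathrm{Pr}[\sup_{s<s_0}\{A(s)\}\ge\ths] \le \mathrm{Pr}[A(s_0)\ge\ths]$ by monotonicity of $A$, and since $A(s_0)$ is stochastically dominated by a Poisson random variable of mean $\bar\lambda s_0$ (same domination argument, now with the overhang absorbed since $s_0$ already exceeds $\Delta/\epsilon_0$ and $t^*/\epsilon_1$), this probability is $1 - \sum_{i=0}^{\ths-1} \bar\lambda^i e^{-\bar\lambda s_0}/i!$. Adding the two pieces gives exactly Equation~(\ref{eq:zeta bound}).

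The main obstacle, and the place where the argument is necessarily somewhat informal (the theorem statement itself says ``after making certain approximations''), is justifying the stochastic domination of $A(s)$ by a Poisson-$\bar\lambda$ process \emph{uniformly in $s$} and simultaneously at a \emph{random} supremizing $s$. The arrival process into a miner's queue is not literally Poisson — it is thinned and shifted by network delay and adversarial reordering, and the adversary's choices are themselves adaptive — so one has to be careful that the union/domination bound over all $s$ is valid rather than just a per-$s$ statement. I would handle this by fixing a coupling: build $A(\cdot)$ on the same probability space as a Poisson-$\bar\lambda$ counting process so that $A(s) \le$ (Poisson count over $[t-s-\text{overhang}, t]$) pathwise, using that honest generation is genuinely Poisson-$\alpha$, adversarial generation genuinely Poisson-$\beta$, and the two overhang windows have deterministic lengths $\le \epsilon_0 s$ and $\le \epsilon_1 s$ once $s\ge s_0$. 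The secondary subtlety is the appeal to the stationary distribution $\pi_i$ of the \md1 queue: strictly this bounds $\limsup_t \mathrm{Pr}[Q_{\bar\lambda}(t)\ge\ths]$, and one either takes $t$ large or notes the queue started empty at genesis so the transient queue is itself dominated by the stationary one. Everything else — the Skellam approximation feeding into $t^*$, the approximation $\gamma = \alpha/(1+\Delta\alpha)$ for the honest growth rate — is upstream of this theorem and taken as given.
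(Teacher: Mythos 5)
Your proposal is correct and follows essentially the same route as the paper's proof: split the supremum in the queue-size formula at $s_0$, bound the $s\ge s_0$ part by noting $A(s)\le X_{\hsti}((1+\epsilon_0)s)+X_{\advi}((1+\epsilon_1)s)$, which is distributed as a Poisson-$\bar\lambda$ arrival process and hence yields the stationary \md1 tail $\sum_{i\ge\ths}\pi_i$, and bound the $s<s_0$ part by ${\rm Pr}[A(s_0)\ge\ths]$ with the Poisson($\bar\lambda s_0$) tail. Your added remarks on coupling and on the transient-versus-stationary \md1 queue only make explicit approximations the paper leaves implicit.
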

\begin{proof}
Let $X_{\hsti}(b), X_{\advi}(b)$ be the random variable denoting
the number of blocks mined by honest miners and adversary in 
a given time interval of length $b$ respectively. 
As we assume that \adv\ withholds a block for at most $t^*$ time
before the honest miner accepts them,
blocks in $A(s)$ are either mined by the adversary during 
$(t-s-t^*,t)$ or mined by honest nodes during $(t-s-\Delta, t)$. 
For any $\epsilon_0>0, \epsilon_1>0$, let 
$s_0=\max\{\frac{\Delta}{\epsilon_0}, \frac{t^*}{\epsilon_1}\}$. 
Then $\forall\ s\ge s_0, s+\Delta<(1+\epsilon_0)s$ 
and $s+t^* < (1+\epsilon_1)s$. Hence,
\begin{align}
A(s) &\le X_{\hsti}(s+\Delta) + X_{\advi}(s+t^*) \\
     &\le X_{\hsti}((1+\epsilon_0)s) + X_{\advi}((1+\epsilon_1)s), \forall\ s\ge s_0 
\end{align}

Let $X(s)$ be a random variable denoting the  number of blocks generated
by a Poisson process within a time interval of size $s$ with arrival rate 
$\bar{\lambda}=(1+\epsilon_0)\alpha + (1+\epsilon_1)\beta$.
Since independent Poisson random variables
are additive, we have the equality in distribution,
\begin{equation}
X(s) \stackrel{d}{=} X_{\hsti}((1+\epsilon_0)s) + X_{\advi}((1+\epsilon_1)s) 
\label{eq:joint poisson}
\end{equation}
\noindent
Hence using equation~\ref{eq:sup queue size}, we have  is
\begin{align}
    &\Pr[Q(t)>b] \\
    &\le {\rm Pr} \left[\bigcup_{s>0}\left\{A(s)-\frac{s}{\tau}>b\right\}\right] \tag{From equation~\ref{eq:sup queue size}} \\
    &= {\rm Pr}\left[\bigcup_{s>s_0}\left\{A(s)-\frac{s}{\tau}> b\right\}\right] 
    + {\rm Pr}\left[\bigcup_{s\le s_0}\left\{A(s)-\frac{s}{\tau} > b\right\}\right] \notag \\
    &\le {\rm Pr}\left[ \bigcup_{s>s_0}\left\{X(s)-\frac{s}{\tau} > b\right\}\right] + {\rm Pr}[A(s_0)>b] \\
    &\le {\rm Pr}\left[ \bigcup_{s>0}\left\{X(s)-\frac{s}{\tau} > b\right\}\right] + {\rm Pr}[A(s_0)>b]
    \label{eq:queue bound}
\end{align}
The first term of equation~\ref{eq:queue bound} is the standard
\md1 tail queue probability with arrival rate $\bar{\lambda}$, processing 
rate $1/\tau$ and hence its tail distribution probability decreases 
with increasing $\ths$. For any given $b$, $t^*$, $\epsilon$, and $\Delta$,
\begin{equation}
{\rm Pr}[A(s_0)>b] \le 1-\sum_{i=0}^{b-1}\frac{\bar{\lambda}^ie^{-\bar{\lambda} s_0}}{i!}.
\end{equation} \qed
\end{proof}

Using our worst-cast analysis, we suggest concrete values of $\ths$
one should consider to bound the probability of an honest miner's 
queue exceeding $\ths$. As we expect attacks to be intermittent 
(if any), we also numerically compute these bounds for an honest 
execution of the protocol, i.e., in a network without any 
adversary. Figure~\ref{fig:zeta est} and~\ref{fig:honest bound} 
plots the result of Theorem~\ref{thm:queue size} under 
some example parameters. 

\noindent
{\bf Concrete choice of $\ths$.}
For any given $\lambda, f_{\rm max}, \Delta,$ and $\tau$, we evaluate 
$\ths$ such that ${\rm Pr}[Q(t) \ge \ths]<0.01$. For all our evaluation 
we have used $\eta=0.001$ in equation~\ref{eq:block withhold}.
Figure~\ref{fig:zeta est} illustrates our results for different
values of $\tau/\intv$ and $\intv/\Delta$. For each $\tau/\intv$ and
$\intv/\Delta$, we pick $s_0$ that minimizes $\ths$. 
For example, with $25\%$ adversary and allowable processing time
equal to half of average interarrival time, i.e. $\tau/\intv=0.5$, 
we get $\ths=39$ for $\intv/\Delta=10$.
\pgfplotsset{small,label style={font=\fontsize{8}{9}\selectfont},legend style={font=\fontsize{7}{8}\selectfont},height=4.2cm,width=.90\linewidth}
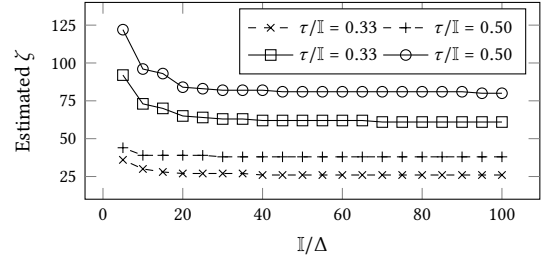
\begin{figure}[t!]
    \centering
    \begin{tikzpicture}
    \begin{axis}[
        legend pos=north east,
        legend columns=2,
        xlabel={$\intv/\Delta$},
        ylabel= {Estimated $\ths$},
        grid=none,
        ymin=10,
        ymax=140,
        ytick = {\empty},
        extra y ticks={25,50,75,100,125},
        extra y tick labels={25,50,75,100,125},
        ]
        \addplot[mark=x, dashed,
                mark options=solid, black] table [x=c, y=25-50, col sep=comma] {data/z-data.csv}; 
        \addplot[mark=+, dashed,
                mark options=solid,, black] table [x=c, y=25-75, col sep=comma] {data/z-data.csv};
        \addplot [mark=square, black] table [x=c, y=33-50, col sep=comma] {data/z-data.csv};
        \addplot [mark=o, black] table [x=c, y=33-75, col sep=comma] {data/z-data.csv};
        \addlegendentry{$\tau/\intv=0.33$}
        \addlegendentry{$\tau/\intv=0.50$}
        \addlegendentry{$\tau/\intv=0.33$}
        \addlegendentry{$\tau/\intv=0.50$}
    \end{axis}
    \end{tikzpicture}
    \caption{Required $\ths$ to upper bound Pr$[Q(t)\ge \ths]$ with $0.01$. Dashed lines correspond
    to $f_{\rm max}=0.25$ and solid lines are for $f_{\rm max}=0.33$.}
    \label{fig:zeta est}
\end{figure}

\noindent{\bf Remark.} 
It is important to note the event $Q(t)\ge \ths$ do not let the 
Byzantine adversary to violate the consistency of the protocol. 
Instead, it only allows the adversary to delay the reporting of 
update state for a very short duration of time. This is because,
when $Q(t)\ge \ths$ honest miners do not extend blocks mined 
by the adversary, and since the block processing rate is 
considerably higher than the block generation rate of the 
adversary, the queue at the honest miners will soon have less 
than $\ths$ blocks. Hence, very soon the honest miners will 
start creating blocks with non-empty state. 

\section{Evaluation}
\label{sec:evaluation}
\pgfplotsset{small,label style={font=\fontsize{8}{9}\selectfont},legend style={font=\fontsize{7}{8}\selectfont},height=3.7cm,width=1.1\textwidth}
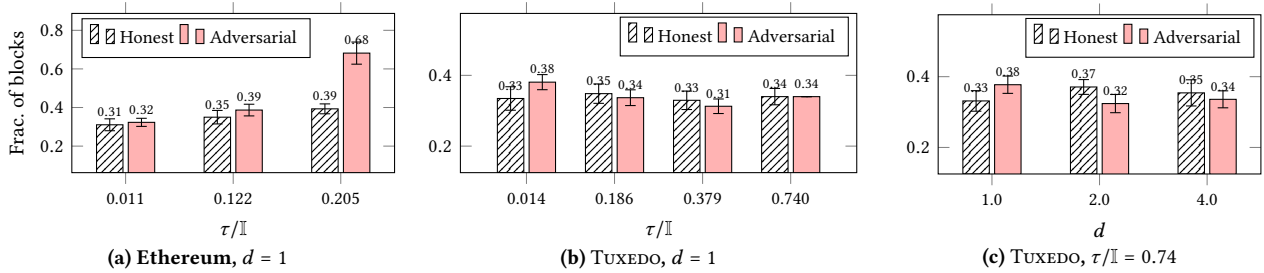
\begin{figure*}
\centering
\begin{subfigure}{0.30\linewidth}
\centering
\begin{tikzpicture}
\begin{axis}[
    ybar,
    enlargelimits=0.25,
    bar width=10pt,
    /pgfplots/ybar=2pt,
    legend columns=3,
    legend pos=north west,
    xlabel={$\tau/\intv$},
    ylabel= {Frac. of blocks},
    symbolic x coords={0.011, 0.122, 0.205},
    grid=minor,
    xtick={0.011, 0.122, 0.205},
    ymin=0.2,
    ymax=0.75,
    nodes near coords,
    every node near coord/.append style={font=\tiny},
    nodes near coords align={vertical},
    ]
    \addplot [pattern = north east lines] [error bars/.cd, y explicit,y dir=both,] table [x=avgHonestExTime, y=actFraction, y error=cfi,  col sep=comma] {data/ethNoSkip.csv}; 
    \addplot [fill=red!30!white] [error bars/.cd, y explicit,y dir=both,] table [x=avgHonestExTime, y=actFraction, y error=cfi, col sep=comma] {data/ethSkip.csv};
    \addlegendentry{Honest}
    \addlegendentry{Adversarial}
\end{axis}
\end{tikzpicture}
\caption{Ethereum, $d=1$}
\label{fig:eth high tau}
\end{subfigure}
\begin{subfigure}{0.35\linewidth}
\centering
\begin{tikzpicture}
\begin{axis}[
    ybar,
    enlargelimits=0.25,
    bar width=10pt,
    nodes near coords,
    /pgfplots/ybar=2pt,
    legend columns=2,
    xlabel={$\tau/\intv$},
    symbolic x coords={0.014, 0.186, 0.379, 0.740},
    grid=minor,
    xtick={0.014, 0.186, 0.379, 0.740},
    ymin=0.2,
    ymax=0.5,
    nodes near coords,
    every node near coord/.append style={font=\tiny},
    nodes near coords align={vertical},
    ]
    \addplot [pattern = north east lines] [error bars/.cd, y explicit,y dir=both,] table [x=exRatio, y=actFraction, y error=cfi, col sep=comma] {data/evdNoSkip.csv}; 
    \addplot [fill=red!30!white] [error bars/.cd, y explicit,y dir=both,] table [x=exRatio, y=actFraction,y error=cfi, col sep=comma] {data/evdSkip.csv};
    \addlegendentry{Honest}
    \addlegendentry{Adversarial}
\end{axis}
\end{tikzpicture}
\caption{\prot, $d=1$}
\label{fig:tuxedo high tau}
\end{subfigure}
\begin{subfigure}{0.30\linewidth}
\centering
\begin{tikzpicture}
\begin{axis}[
    ybar,
    enlargelimits=0.25,
    bar width=10pt,
    nodes near coords,
    /pgfplots/ybar=2pt,
    legend columns=2,
    xlabel={$d$},
    symbolic x coords={1.0,2.0,4.0},
    xtick={1.0,2.0,4.0},
    grid=minor,
    ymin=0.2,
    ymax=0.5,
    nodes near coords,
    every node near coord/.append style={font=\tiny},
    nodes near coords align={vertical},
    ]
    \addplot [pattern = north east lines] [error bars/.cd, y explicit,y dir=both,] table [x=delay, y=actFraction, y error=cfi, col sep=comma] {data/evdDelay.csv}; 
    \addplot [fill=red!30!white] [error bars/.cd, y explicit,y dir=both,] table [x=delay, y=actFraction,y error=cfi, col sep=comma] {data/evdDelaySkip.csv}; 
    \addlegendentry{Honest}
    \addlegendentry{Adversarial}
\end{axis}
\end{tikzpicture}
\caption{\prot, $\tau/\intv=0.74$}
\label{fig:tuxedo high delay}
\end{subfigure}
\caption{Fraction of blocks mined by the miner $n_1$ controlling
33\% mining power and $c=0.6$ in Ethereum and \prot}
\end{figure*}
\pgfplotsset{small,label style={font=\fontsize{8}{9}\selectfont},legend style={font=\fontsize{7}{8}\selectfont},height=3.7cm,width=1.1\textwidth}
\begin{figure*}
\centering
\begin{subfigure}{0.30\linewidth}
\centering
\begin{tikzpicture}
\begin{axis}[
    ybar,
    enlargelimits=0.25,
    bar width=10pt,
    /pgfplots/ybar=2pt,
    legend columns=3,
    legend pos=north east,
    xlabel={$\tau/\intv$},
    ylabel= {Mining Power Util.},
    symbolic x coords={0.011, 0.122, 0.205},
    grid=minor,
    xtick={0.011, 0.122, 0.205},
    ymin=55,
    ymax=115,
    nodes near coords,
    every node near coord/.append style={font=\tiny},
    nodes near coords align={vertical},
    ]
    \addplot [pattern = north east lines] [error bars/.cd, y explicit,y dir=both,] table [x=gasUsage, y=forkRate, y error=cfiFork,  col sep=comma] {data/eth-minefrac.csv}; 
    \addplot [fill=red!30!white] [error bars/.cd, y explicit,y dir=both,] table [x=gasUsage, y=forkRate, y error=cfiFork, col sep=comma] {data/eth-minefrac-skip.csv};
    \addlegendentry{Honest}
    \addlegendentry{Adversarial}
\end{axis}
\end{tikzpicture}
\caption{Ethereum, $d=1$}
\label{fig:eth mpu high tau}
\end{subfigure}
\begin{subfigure}{0.35\linewidth}
\centering
\begin{tikzpicture}
\begin{axis}[
    ybar,
    enlargelimits=0.25,
    bar width=10pt,
    nodes near coords,
    /pgfplots/ybar=2pt,
    legend columns=2,
    xlabel={$\tau/\intv$},
    symbolic x coords={0.014, 0.186, 0.379, 0.740},
    grid=minor,
    xtick={0.014, 0.186, 0.379, 0.740},
    ymin=55,
    ymax=115,
    nodes near coords,
    every node near coord/.append style={font=\tiny},
    nodes near coords align={vertical},
    ]
    \addplot [pattern = north east lines] [error bars/.cd, y explicit,y dir=both,] table [x=gasUsage, y=forkRate, y error=cfiFork, col sep=comma] {data/evd-minefrac.csv}; 
    \addplot [fill=red!30!white] [error bars/.cd, y explicit,y dir=both,] table [x=gasUsage, y=forkRate,y error=cfiFork, col sep=comma] {data/evd-minefrac-skip.csv};
    \addlegendentry{Honest}
    \addlegendentry{Adversarial}
\end{axis}
\end{tikzpicture}
\caption{\prot, $d=1$}
\label{fig:tuxedo mpu high tau}
\end{subfigure}
\begin{subfigure}{0.30\linewidth}
\centering
\begin{tikzpicture}
\begin{axis}[
    ybar,
    enlargelimits=0.25,
    bar width=10pt,
    nodes near coords,
    /pgfplots/ybar=2pt,
    legend columns=2,
    xlabel={$d$},
    symbolic x coords={1.0,2.0,4.0},
    xtick={1.0,2.0,4.0},
    grid=minor,
    ymin=55,
    ymax=115,
    nodes near coords,
    every node near coord/.append style={font=\tiny},
    nodes near coords align={vertical},
    ]
    \addplot [pattern = north east lines] [error bars/.cd, y explicit,y dir=both,] table [x=delay, y=forkRate, y error=cfiFork, col sep=comma] {data/evd-minefrac-delay.csv}; 
    \addplot [fill=red!30!white] [error bars/.cd, y explicit,y dir=both,] table [x=delay, y=forkRate,y error=cfiFork, col sep=comma] {data/evd-minefrac-delay-skip.csv}; 
    \addlegendentry{Honest}
    \addlegendentry{Adversarial}
\end{axis}
\end{tikzpicture}
\caption{\prot, $\tau/\intv=0.74$}
\label{fig:tuxedo mpu delay}
\end{subfigure}
\caption{Mining power utilization of the (a)Ethereum and (b),(c) \prot\ with increasing $\tau/\intv$ and Network delay.}
\end{figure*}
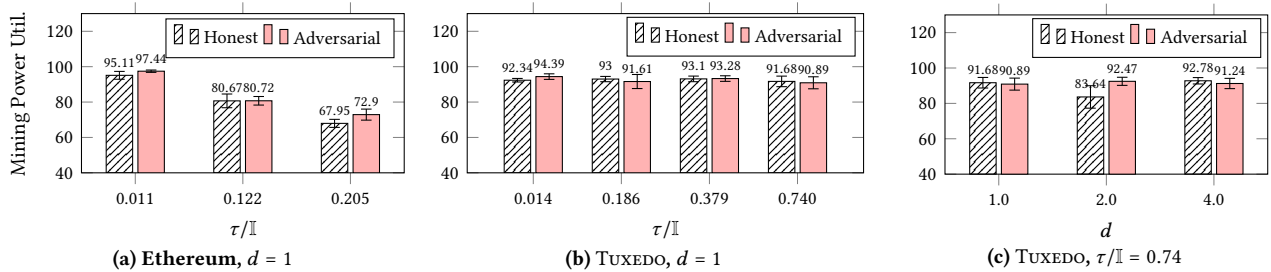

To evaluate computation scalability of \prot\ we built 
a prototype of on top of Ethereum Geth client 
version~${\tt 1.9.3}$. Our implementation consists of all
parts of \prot\ including an on-demand adversarial behavior
to skip validation of transactions. 
In many experiments we compare the performance of \prot\ with 
that of Ethereum. To facilitate such comparisons, we implement an 
adversary who skips validation and/or creation of blocks in
Ethereum.

\subsection{Experimental Setup}
\label{sub:expt setup}
Our experimental setup consists of 50 virtual machines (VMs) 
running in Oracle Cloud. All expect one of the VMs are dual-core 
machines with 8GM of RAM running Ubuntu.16.04. The remaining VM
which we use as an adversarial node has 8 core CPU @2.19 GHz,
30GB of RAM. We deliberately assign one node a computational
advantage over others to measure its effects on the fairness of Ethereum and
\prot. In our setup, all nodes have identical network 
bandwidth of 1GB/s download and 100MB/s upload speeds.

\noindent{\bf Node.}
Each VM in our experimental setup runs one \prot\ node. The mining 
power of each node is set according to the distribution of top 50 
Ethereum miners extracted from~\cite{ethMining}.
This corresponds to 99.98\% of Ethereum's total mining power. 
For each node in our setup, we simulate its block mining process 
by drawing the interarrival time between the blocks from an 
exponential distribution with parameter $\lambda/h$ where $h$
is the fraction of mining power controlled by the node. In 
Table~\ref{tab:mining fraction} reports the percentage of 
mining power controlled by top 14 miners that sum up to 97\% of
the total mining power.
\begin{table}[!h]
    \begin{center}
        \begin{tabular}{c c c c c c c}
        \hline
        32.98 & 16.16 & 15.06 & 5.72 & 5.67 & 4.41 & 4.14 \\
        3.53 & 2.61 & 1.84 & 1.34 & 1.32 & 1.25 & 1.05 \\
        \hline
        \end{tabular}
    \end{center}
    \label{fig:mining fraction}
    \caption{Percentage of mining power controlled by top 14
    miners in descending order our experimental setup. For 
    e.g. first and 14th miner controls $32.98\%$ and $1.05\%$
    of the mining power respectively.}
    \label{tab:mining fraction}
\end{table}

\noindent{\bf Network.}
To make a prototype of this Ethereum mining network, 
we collected data regarding the geographical location 
of the top 50 miners in Ethereum. Each node in our experiment 
emulates the geographical location of one such miner. We then
form a randomly connected network of these nodes where the degree
distribution follows a power law with exponent $-2.5$. 
Communication delays between every pair of nodes in the network 
are set accordingly to the ping delays observed between 
respective geographical locations~\cite{pingDelay}. We use Linux
${\tt tc}$ command to simulate the link delays.

\noindent{\bf Methodology.}
\label{sub:evaluation methodology}
We test \prot\ by deploying 50 contracts each implementing 
Quicksort, 2D matrix multiplication, and iteration with basic 
arithmetic operations. We then invoke functions from each
contract with appropriate parameters to achieve 
the desired block processing time. 
Throughout our experiment, we ensure that each block 
contains \textasciitilde165 transactions in total. 
As we simulate an adversary (\adv) that skips validation of 
blocks and creates new ones with contracts whose execution 
results are already known to the \adv, we deliberately restrict
all of the above mentioned contracts to be stateless. Note that,
as the primary metric of evaluating \prot\ is processing time of
a block, any choice of contracts will give us the same
results as long as they achieve the desired
block processing time.

\pgfplotsset{small,label style={font=\fontsize{8}{9}\selectfont},legend style={font=\fontsize{7}{8}\selectfont},height=3.9cm,width=1\textwidth}
\begin{figure*}[!htb]
\centering
\minipage{0.62\textwidth}
    \begin{subfigure}{0.50\linewidth}
    \centering
    \begin{tikzpicture}
    \begin{axis}[
        ybar,
        enlargelimits=0.25,
        bar width=8pt,
        /pgfplots/ybar=0pt,
        legend columns=2,
        legend pos=north east,
        xlabel={$\tau/\intv$},
        ylabel= {Mining Power Util.},
        symbolic x coords={0.011, 0.122, 0.205},
        grid=minor,
        xtick={0.011, 0.122, 0.205},
        ymin=55,
        ymax=125,
        ]
        \addplot [pattern = north east lines] [error bars/.cd, y explicit,y dir=both,] table [x=gasUsage, y=frac1, y error=cfi1,  col sep=comma] {data/eth-minefrac.csv}; 
        \addplot [fill=red!30!white] [error bars/.cd, y explicit,y dir=both,] table [x=gasUsage, y=frac1, y error=cfi1, col sep=comma] {data/eth-minefrac-skip.csv};
        \addplot [pattern = north west lines] [error bars/.cd, y explicit,y dir=both,] table [x=gasUsage, y=frac2, y error=cfi2,  col sep=comma] {data/eth-minefrac.csv}; 
        \addplot [fill=blue!30!white] [error bars/.cd, y explicit,y dir=both,] table [x=gasUsage, y=frac2, y error=cfi2, col sep=comma] {data/eth-minefrac-skip.csv};
        \addlegendentry{$n_1$,Hst}
        \addlegendentry{$n_1$,Adv}
        \addlegendentry{$n_2$,Hst}
        \addlegendentry{$n_2$,Adv}
    \end{axis}
    \end{tikzpicture}
    \caption{Ethereum, $d=1$}
    \label{fig:eth mpu miner high tau}
    \end{subfigure}
    \begin{subfigure}{0.50\linewidth}
    \centering
    \begin{tikzpicture}
    \begin{axis}
        [
        legend pos=north west,
        legend columns=1,
        symbolic x coords={0.011, 0.122, 0.205},
        xtick={0.011, 0.122, 0.205},
        xlabel= {$\tau/\intv$},
        ylabel= {Propagation delay},
        grid=minor,
        ]
        \addplot [red, mark=+, mark options=solid, error bars/.cd, y explicit, y dir=both,] 
        table [x=gas, y=noSkipMean, y error=noSkipCfi, col sep=comma] {data/eth-forkrate.csv};
        \addplot [blue, mark=x, error bars/.cd, y explicit,y dir=both,] 
        table [x=gas, y=skipMean, y error=skipCfi, col sep=comma] {data/eth-forkrate.csv}; 
        \addlegendentry{Honest}
        \addlegendentry{Adversarial}
    \end{axis}
    \end{tikzpicture}
    \caption{Ethereum, Median block prop. delay}
    \label{fig:eth high prop delay}
    \end{subfigure}
    \caption{(a) Mining power utilization of first two miner 
    ($n_1$ and $n_2$) (b) Measured median propagation delay of blocks with increasing $\tau/\intv$ in Ethereum.}
\endminipage\hfill
\minipage{0.35\textwidth}
    \centering
    \pgfplotsset{label style={font=\fontsize{8}{9}\selectfont},height=3.9cm,legend style={font=\fontsize{7}{8}\selectfont},width=\linewidth}
    \begin{tikzpicture}
    \begin{axis}[
        legend columns=1,
        xlabel={Queue length},
        ylabel= {Fraction of blocks},
        ] 
        \addplot [black, mark=x, error bars/.cd, y explicit,y dir=both,] table [x=queueLen, y=fracBlock240, y error=cfi240, col sep=comma] {data/queueEvdNoSkip.csv};
        \addplot [red, mark=+,error bars/.cd, y explicit,y dir=both,]table [x=queueLen, y=fracBlock240, y error=cfi240, col sep=comma] {data/queueEvdSkip.csv};
        \addlegendentry{$\tau/\intv=0.379$, hst.}
        \addlegendentry{$\tau/\intv=0.379$, adv.}
    \end{axis}
    \end{tikzpicture}
    \caption{Average queue size of honest miners (excluding $n_1$) for
    $\tau/\intv=0.379$ and $d=1$}
  \label{fig:tuxedo queue}
\endminipage\hfill
\end{figure*}

\subsection{Experiments and Results}
\label{sub:experimental results}
We first evaluate the effect of increasing 
$\tau/\intv$ in Ethereum with all miners being 
honest. We then repeat the experiment in the 
presence of an adversary \adv, who skips validation 
of received blocks and creates blocks with transactions for 
which \adv already knows the execution results. 
We then perform the same set of 
experiments with \prot\ and compare our
results with Ethereum.
Next for fixed $\tau/\intv=0.70$ we evaluate
\prot\ with increasing network delay. 

In all experiments the first miner ($n_1$) controls 
\textasciitilde33\% of the network's mining power and 
can process \textasciitilde1.67 times faster than other 
miners. In all experiments we keep block mining difficulty 
such that $1/\lambda=15.0$.
Hence $\intv$ is equal $2\tau+1/\lambda$ for Ethereum 
experiments and $1/\lambda$ for \prot\ experiments.

\noindent{\bf Fairness violations in Ethereum.}
Figure~\ref{fig:eth high tau} illustrates the fraction of
Ethereum blocks miner $n_1$ mines. Observe that with 
increasing $\tau/\intv$, the fraction of blocks $n_1$ mines 
increases even when $n_1$ is honest corroborating our theoretical 
analysis~(ref. Appendix~\ref{apx:high tau ethereuem}). 
When $n_1$ is adversarial, i.e. $n_1$ skips both validation
and creation of blocks, $n_1$ mines significantly higher
fraction of blocks. These fractions are higher than
theoretically computed fraction in Figure~\ref{fig:eth high tau}.
Because unlike our simplistic assumption in theoretical 
analysis~(ref.\ref{apx:high tau ethereuem}), in the 
experiments $n_1$ mines for the entire duration of the 
experiment.

\noindent{\bf High fork rate in Ethereum.}
With high $\tau/\intv$ in Ethereum, we observe that
the the fork rate of Ethereum increases. Let 
{\em Mining Power Utilization}~(MPU) of a blockchain network
be defined as --- the fraction of blocks mined by the
miners that end up in the eventual longest chain. 
A blockchain with lower MPU implies that a lower
fraction of mined blocks end up being in the blockchain and that many blocks are orphaned.
Figure~\ref{fig:eth mpu high tau} illustrates the MPU 
of Ethereum network with increasing $\tau/\intv$. 
Notice that despite having $\lambda=1/15$ 
and $\intv=2\tau+1/\lambda$, the fork rate increases.
This is because miners in Ethereum only forward a block
when they have fully validated its parent block. Thus
with high $\tau/\intv$, miners in Ethereum more frequently
encounter blocks whose parents are yet to be validated
by the miners. As a result, with $\tau/\intv$, the 
effective propagation delay of blocks in the network increase which
leads to higher forks and lowers the MPU of the network.

Figure~\ref{fig:eth high prop delay} illustrates that the 
median propagation delay of Ethereum network. Median 
block propagation delay in the presence
of a adversarial node is lower due to fact that the 
adversarial node can forward blocks immediately as it 
processes all received blocks immediately. 
Also we observe that higher $\tau/\intv$ does not lower
MPU of the adversarial node $n_1$ by much. Further, MPU
of a adversarial node is much higher than its honest 
counterpart. Figure~\ref{fig:eth mpu miner high tau} illustrates 
the mining power utilization of top 2 miner in our 
experimental setup~(ref.~\ref{tab:mining fraction}). 
This is because when $n_1$ is adversarial, it mines
solo for longer duration. This allows $n_1$ to mine
longer sequence of blocks more frequently while other 
miners were busy extending an older block. Hence, 
when network gets synchronized again, that is honest nodes have to backlog of blocks to process, the string of blocks mined by
the $n_1$ enters the blockchain with high 
probability.

\noindent{\bf Increasing $\tau/\intv$ in \prot.}
We demonstrate that higher $\tau/\intv$ does not affect
miners in \prot\ by repeating the above experiments in \prot.
Figure~\ref{fig:tuxedo high tau} illustrates that the fraction 
of blocks $n_1$ mines does not vary in \prot\ despite  high 
$\tau/\intv$ and an adversarial $n_1$. 
Also, unlike Ethereum high $\tau/\intv$ does not affect the
mining power utilization of \prot\ since all miners can immediately
validate all received blocks. This is illustrated in 
Figure~\ref{fig:tuxedo mpu high tau}. We use $\ths$ based
on our analysis in~\textsection\ref{sub:zeta choice} for
$t^*=0$ and ${\rm Pr}[Q(t)\ge\zeta] \le 2^{-15}$.

Figure~\ref{fig:tuxedo queue} demonstrates the queue observed by a 
arriving block in an honest miner for $\tau/\intv=0.379$. 
Observe that presence of a skipping adversary does alter the 
queue size honest miners observe. Further, more than $99\%$ of 
the blocks find a queue size of four or less and less 
than $<0.1\%$  blocks finds a queue size of eight or higher.
This implies that though $\ths$ is chosen to be high,
for majority of the blocks the users of \prot\ will get the
execution results of their transactions within four blocks. 

\noindent{\bf Increasing network delay.}
For fixed $\tau/\intv=0.74$ we evaluate \prot\ with 
increasing delay. Specifically we increase link delay
between each pair of connected node by a factor of 
$d=1,2,4$. Figure~\ref{fig:tuxedo high delay} illustrates
that higher delay does not affect the fraction of blocks
mined by $n_1$ for both honest and adversarial $n_1$. 
We also observe that the MPU of \prot\ does not decrease
in \prot\ despite higher network delay~(see Figure~\ref{fig:tuxedo mpu delay}). 
This is because,
the top five miners controlling approximately 75\% of mining
power in our experimental network are in close proximity 
with each other. Hence the increased delay does not affect
the block propagation delay between them. 

\section{Related Work}
\label{sec:related work}
To best of our knowledge, our work is the first on-chain solution to increase the ratio between block validation time and average 
block inter-arrival time in a PoW based blockchain.

There have been attempts to enable the execution of computationally
intensive smart contracts~\cite{kalodner2018arbitrum,
eberhardt2018zokrates,teutsch2017scalable,das2018yoda,cheng2019ekiden} 
through off-chain solutions where the execution of intensive transactions is delegated to a subset of 
miners or volunteer nodes. These solutions typically have high 
latency for off-chain computations and also make additional 
security assumptions beyond those required for PoW consensus.
Also, off-chain solutions restrict certain interactions between contracts, e.g., one smart 
contract cannot internally invoke functions of other smart
contracts. Such interactions are desirable and often occur 
in practice (see Appendix~\ref{apx:interactive contracts}).
Arbitrum~\cite{kalodner2018arbitrum} requires nodes participating
in the protocol to be rational and one participating node to be 
honest. Yoda~\cite{das2018yoda} requires an unbiased source of 
randomness whereas current mechanisms of generating distributed 
randomness are highly expensive~\cite{syta2017scalable}. 
Ekiden~\cite{cheng2019ekiden} relies on SGX Enclaves and requires 
all enclaves to be trusted, an assumption that is made questionable
by recent attacks~\cite{van2018foreshadow,brasser2017software}.
In Zokrates~\cite{eberhardt2018zokrates} participants are required 
to generate expensive non-interactive proofs for verification of 
off-chain computations.

A concurrent work ACE~\cite{wust2019ace} provides an off-chain 
solution that securely scales the smart contract execution and 
enables interactive calls between smart-contracts by 
implementing a variant of the classic two-phase commit 
protocol. Thus, interactive contracts in ACE have a very 
long delay as they can be delayed by the slowest committee 
involved in the interaction. Also ACE requires committee 
members to broadcast the updated state to the entire 
network. This can lead to large bandwidth usage for contracts 
that update a large number of variables. 
Furthermore, in ACE more than half of the committee members 
for each contract must be honest for guaranteed safety and 
liveness of the contract execution, and it relies on a 
reputation based mechanism to realize such assumptions. 
Note that this is a much stronger assumption than assuming, as \prot\ does, that
half of the total mining power is honest.

Note that \prot\ does not eliminate the benefits of off-chain 
based approaches. Instead, it further opens up the possibilities 
for more comprehensive (computationally intensive, if needed) 
on-chain aggregation schemes for off-chain schemes such as verifiable computation~\cite{lee2020replicated}. 
This can further scale the overall computation capacity of the entire system.

In an alternative approach, a recent line of work has tried to increase
the scalability of smart contracts by concurrently executing 
transactions~\cite{DickersonGHK17,anjanConcurrent,zhang2018enabling}. 
Dickerson et al.~\cite{DickersonGHK17} enables miner to concurrently 
execute transactions using a pessimistic abstract lock and inverse-log 
represented as a directed acyclic graph (happen-before graph). This 
inverse-log is later used in the validation phase, to replay the block 
creator's parallelization schedule.
Anjana et al.~\cite{anjanConcurrent} replaced the pessimistic lock with 
Optimistic Concurrency Control favoring low-conflict workloads but at the 
cost of high abort rate for transactions with more conflicts. 
Zhang et al.~\cite{zhang2018enabling} improves concurrency of the 
validation phase by recording the write set of each transaction in the 
block, which incurs additional storage and communication overhead.
These works are optimistic in the sense that the adversary  
can always create blocks whose validation cannot be parallelized.
Hence, they are not suitable block validation time in the 
presence of a Byzantine adversary. 

Blockchain sharding based solutions~\cite{luu2016secure,Omniledger,
ChainSpace,zamani2018rapidchain,wang2019monoxide} partition the 
set of nodes in the system into smaller groups, known as
{\em shards}, where each shard maintains a subset of blockchain states and processes a subset of transactions. 
They scale overall transaction processing capacity 
of the system in proportion to the number of shards in the 
system. However, the processing capacity of each shard 
is still limited, and hence they are not suitable for 
executing computationally intensive transactions within
each shard. 
Furthermore, cross-shard transactions, i.e., 
transactions that modify state from more than one shard 
still suffer from longer latency~\cite{ruan2019blockchains}.

\section{Discussion}
\label{sec:faq}
We discuss some questions about \prot\ in this section.

\one
\noindent{\bf How \prot\ handle the invalid transactions? (contract has an error or runs out of gas)}  Such invalid transactions are treated in the same way as that of Ethereum. Since the execution of transactions is deterministic, if a transaction fails at a node, it will fail at every other node. So each node will locally immediately roll back the states modified by the failed transaction.

\one
\noindent{\bf Can \prot\ be useful for other consensus protocols besides PoW? } 
\prot's core idea of delaying execution of transactions by up to
$\ths$ blocks can be applied to other consensus protocols such as Algorand, proof-of-stake, and BFT protocols. Determining the gains it provides with these protocols is a potential future research direction.

\one 
\noindent{\bf How realistic is the assumption of separate processing units (hardware) for transaction validation and proof of work (consensus)? }
\prot\ will be adopted by blockchains such as Bitcoin and
Ethereum where block processing can be done using CPUs while
mining requires ASICs. Furthermore, the number of simultaneous
forks in them are quite small~\cite{gencer2018decentralization}.



\one
\noindent{\bf How does \prot\ handle light clients? } Blocks in \prot\ maintain the cryptographic digest of the state, which allows light clients to efficiently validate or prove to
other light clients, a portion of the latest state similar to Ethereum.

\one
\noindent{\bf How \prot\ differs from off-chain solutions that do not make additional security assumptions as that of the PoW blockchain like Rollups?}
Rollups is the second layer solution that requires "operators" to stake a bond in the rollup contract. This is not proposed as the smart contract scaling solution but instead a high throughput solution that enhances the transaction throughput between the operators involved in the bond. It also fails in the case of interactive smart contract transactions.


\one
\noindent{\bf Why there is need to modify the standard longest chain rule?}
Since validation times of blocks are large, a miner may be presented with a long chain which it cannot immediately validate. By mining on the longest validated chain, we ensure that a miner does not mine on any chain which contains a block with an invalid state. In addition, honest miners will not stop mining even if a long unvalidated chain is known to it. We use this fact in our security proofs against a general adversary.

\section{Conclusion}
\label{sec:discussion}
We have presented \prot\, which theoretically allows validation
time of blocks in PoW based blockchains to be comparable to the
average interarrival time, i.e. $\tau/\intv \approx 1$.
Such a high validation time allows \prot\ to scale execution 
of smart contracts on-chain. 
Hence, it makes blockchains accessible to 
applications with heavy computation. 
Another advantage of the on-chain approach is that all miners 
update state locally and hence obviate the need for transferring
state updated due to transaction execution. Hence the bandwidth 
usage of \prot\ is identical to existing system such as Ethereum.

We prove the security of \prot\ in synchronous network with end-to-end
delay of $\Delta$ in the presence of a Byzantine adversary 
considering all possible adversarial strategies.
We also present a principled approach to pick $\ths$ for any given 
choice of parameters.

Although, state corresponding to a contract transactions gets
reported $\ths$ block later, our analysis and evaluation demonstrate
that, most~$(>99\%)$ blocks in \prot\ finds a queue size of less
than five on its arrival. Hence in practice, miners will have execution
results of transactions reasonably quickly. Furthermore, 
$\toki$ transactions are executed
immediately thus can be used in the low latency applications.
Our experimental results demonstrate working of \prot\ for 
$\tau/\intv = 0.70$ for an implementation over a standard Ethereum 
geth client.

\section*{Acknowledgment}
The authors would like to thank Andrew Miller and Manoj M. Prabhakaran for their feedback on the early version of the paper. The work is supported in part by a generous grant of cloud credits from Oracle Corp., that we have used to run all of the experiments whose results are reported in this paper.

\Urlmuskip=0mu plus 1mu\relax
\bibliographystyle{ACM-Reference-Format}
\bibliography{references}

\appendix
\section{Block Processing time in Ethereum}
\label{apx:eth block processing time}
We measure $\tau$, the time required to execute all transactions in 
Ethereum blockchain for the first 7.5 Million blocks using Ethereum
Geth Client. Our measurement using a virtual machine with 16 
cores, 120GB memory, 6.4TB NVMe SSD, and 8.2 Gbps network 
bandwidth shows that $\tau$ is only about $1\%$ of the average interarrival time. 
Figure~\ref{fig:execution time} illustrates 
our measurement for all the blocks.
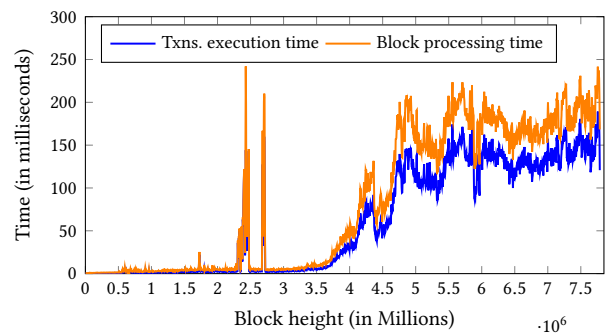
\begin{figure}[h!]
    \centering
    \pgfplotsset{footnotesize,height=5cm, width=\linewidth}
    \begin{tikzpicture}
    \begin{axis}[
        legend pos=north west,
        ymin=0.0,
        ymax=300,
        xmin=0.0,
        xmax=7850000,
        legend columns=2,
        xlabel={Block height (in Millions)},
        ylabel= {Time (in milliseconds)},
        grid=minor,
        ] 
        \addplot[line width=0.75, mark=none, blue] table [x=blockHeight, y=xTime, col sep=comma] {data/timeMeasurement.csv};
        \addplot [line width=0.75, mark=none, orange] table [x=blockHeight, y=tTime, col sep=comma] {data/timeMeasurement.csv};
        \addlegendentry{Txns. execution time}
        \addlegendentry{Block processing time}
    \end{axis}
    \end{tikzpicture}
    \caption{Time taken to validate a received block Ethereum by 
    re-executing the contained transactions. Transaction execution 
    time refers to the time taken to execute only the transactions 
    of a received block while block processing time includes both
    transaction execution time and time taken to write the updated
    state to the disk.}
    \label{fig:execution time}
\end{figure}

\section{High $\tau$ in legacy PoW Blockchains}
\label{apx:high tau ethereuem}
Recall from~\textsection\ref{sec:background}, on arrival of
new block, a miner first validates the received block for $\tau$
units of time and then creates the next block for another 
$\tau$ units of time. Only after $2\tau$ time units the miner starts
PoW for the next block. Whereas when a miner mines the block 
himself it only spends $\tau$  creating the next block
before starting PoW. Note that in case a miner receives the
next block while validating the previous one, the time
for which the miner needs to wait before it could start PoW
is higher than $2\tau$. The exact waiting time depends on the
exact time of arrival. We make a simplification and  assume 
that all miners~(including adversary \adv) 
releases blocks only after $\tau$ units of time has passed
from the broadcast of the previous block. We also assume that
the network is fully synchronous. 

Let $\Lambda = \{\lambda_j|j=1,2,\ldots,|N|\}$ be the 
block  arrival rates due to a miner $n_j$ during the periods they are performing PoW. 
Let $\tau_1$and $\tau$ where $\tau_1 = c\tau$ for 
$0\le c\le 1$ denote the time 
required to validate a full block by miner $n_1$ and 
other miners $n_j$ for $j>1$ respectively. Note that $c<1$ 
implies that $n_1$ can validate (and also create) blocks faster 
than others. For example, $c=1/2$ implies $n_1$ takes half the time
than others take to validate or create blocks.
Let $U=\{1,2,\cdots,|N|\}$ be the states of the MC where
state $u$ represents that the miner of the latest 
block is $n_u$. State transition happens on arrival of every
block and $p_{u,v}$ denote the transition probability for
the transition from state $u$ to state $v$. Note that
on every state transition to state $v$, $n_u$ mines 
a single block. 

\noindent
{\bf All honest miners.}
Since all miners in Ethereum do not always start 
PoW simultaneously, transition probabilities 
in the MC must consider only active miners.
For example, for $c>1/2$, in state $1$, no miner performs PoW 
for the first $c\tau$ time units. Between time $(c\tau,2\tau]$ 
only $n_1$ does PoW while other miners were busy validating
and creating blocks. On the contrary,
at any other state $v$, $n_v$ starts PoW at time $\tau$,
$n_1$ starts at $2c\tau$ and the remaining start at time 
$2\tau$. This is depicted in the top diagram of 
Figure~\ref{fig:eth honest timing}. Similarly, the bottom
diagram in the same Figure illustrates the miners that
perform PoW for at different intervals starting from 
different state. 
In this paper we only derive transaction probabilities 
of the MC for $c>1/2$ as one can easily derive for 
$c\le1/2$ using similar approach.
\begin{figure}[t!]
    \centering
    \includegraphics[height=3.5cm]{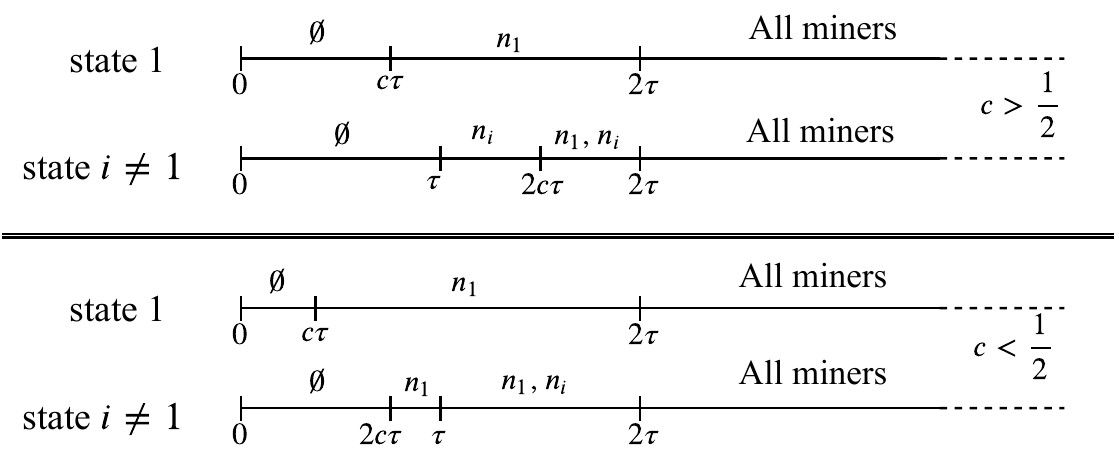}
    \caption{Miners who are active during each time interval from 
    the instant when last block gets generated in a situation 
    where all miners are  honest.}
    \label{fig:eth honest timing}
\end{figure}

\begin{lemma}
Let $X,Y$ be two independent random variable with exponential 
distribution with arrival rate $\lambda_x, \lambda_y$ respectively. 
Then the probability of the event $A = (X\le \tau \land X\le Y)$ 
is denoted using $N(\lambda_x, \lambda_y, \tau)$ and is equal to:
\begin{align}
\text{Pr}[A] & = N(\lambda_x, \lambda_y, \tau) \notag \\
             & = \frac{\lambda_x}{\lambda_x + \lambda_y}\left(1-e^{-(\lambda_x + \lambda_y)\tau}\right) 
\label{eq:compete block}
\end{align}
\label{lem:compete block}
\end{lemma}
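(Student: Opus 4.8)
The plan is to compute $\Pr[A]$ directly by conditioning on the value of $X$ and using independence of $X$ and $Y$. Since $A = \{X \le \tau\} \cap \{X \le Y\}$, I would write
\begin{equation}
\Pr[A] = \int_0^\tau f_X(x)\,\Pr[Y \ge x]\,dx,
\label{eq:condition on X}
\end{equation}
which is valid because, conditioned on $X = x$, the event $\{X \le Y\}$ becomes $\{Y \ge x\}$ and $Y$ is independent of $X$. The only subtlety is handling ties $\{X = Y\}$, but these occur with probability zero since $X,Y$ are continuous, so whether we use $\ge$ or $>$ in the tail term is immaterial.

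Next I would substitute the exponential density and survival function: $f_X(x) = \lambda_x e^{-\lambda_x x}$ for $x \ge 0$, and $\Pr[Y \ge x] = e^{-\lambda_y x}$. Plugging these into~\eqref{eq:condition on X} gives
\begin{align}
\Pr[A] &= \int_0^\tau \lambda_x e^{-\lambda_x x} e^{-\lambda_y x}\,dx
       = \lambda_x \int_0^\tau e^{-(\lambda_x + \lambda_y)x}\,dx \notag \\
       &= \frac{\lambda_x}{\lambda_x + \lambda_y}\left(1 - e^{-(\lambda_x + \lambda_y)\tau}\right),
\end{align}
which is exactly $N(\lambda_x, \lambda_y, \tau)$ as claimed. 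The integral is a single elementary exponential integral, so no real calculation obstacle arises.

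There is essentially no hard step here: the lemma is a standard fact about the minimum of two independent exponentials truncated to a finite horizon. If anything, the only thing to state carefully is the independence assumption being used to factor the joint density in~\eqref{eq:condition on X}, and the observation that letting $\tau \to \infty$ recovers the familiar $\lambda_x/(\lambda_x+\lambda_y)$ as a sanity check. One could alternatively derive it by first noting $\min(X,Y)$ is exponential with rate $\lambda_x + \lambda_y$, so $\Pr[\min(X,Y) \le \tau] = 1 - e^{-(\lambda_x+\lambda_y)\tau}$, and that $\Pr[X \le Y \mid \min(X,Y)] = \lambda_x/(\lambda_x+\lambda_y)$ independently of the value of the minimum; multiplying these two facts yields the same expression. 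I would likely just present the direct integration as it is shortest and fully self-contained.
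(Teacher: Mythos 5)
Your proposal is correct and follows essentially the same route as the paper: condition on $X$, multiply the exponential density by the survival probability $\Pr[Y \ge x] = e^{-\lambda_y x}$, and evaluate the single elementary integral $\int_0^\tau \lambda_x e^{-(\lambda_x+\lambda_y)x}\,dx$. (In fact your write-up is slightly cleaner, since the paper's proof misstates the tail probability as $1-\lambda_y e^{-\lambda_y t}$ before using the correct $e^{-\lambda_y t}$ inside the integral.)
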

\noindent\begin{proof}
For any arbitrary time $t$ where $0<t\le\tau$, 
Pr$[t-dt\le X \le t] = f_X(t)dt = \lambda_xe^{-\lambda_xt}dt$ 
and Pr$[Y>t] = 1 - \lambda_ye^{-\lambda_yt}$. Thus to get the 
closed form expression for Pr$[A]$, we compute,
\begin{align}
\text{Pr}[A] &= \int_{0}^{\tau}f_X(t)\text{Pr}[Y>t]dt \notag \\
             &= \int_{0}^{\tau}\lambda_xe^{-\lambda_xt}e^{-\lambda_yt}dt
\end{align}
Solving the above produces the desired result.
\end{proof}

\begin{theorem}
For a particular $c>1/2,\tau$ and 
$\Lambda=\{\lambda_j|j=1,2,\ldots,|N|\}$, with every miner 
honestly following the protocol, the state transition probabilities 
are given as:
\begin{align}
p_{1,1} &= 1-M_0(\lambda_1, 2\tau-c\tau) + p_1 M_0(\lambda_1,2\tau-c\tau) \\
p_{u,1} &= M_0(\lambda_u,2c\tau-\tau)\left[N(\lambda_1,\lambda_u, 2\tau - c\tau) \right. \notag\\ 
        &  \left. \qquad+ M_0(\lambda_1+\lambda_u, 2\tau-c\tau)p_1 \right]\\
p_{1,v} & = p_vM_0(\lambda_1, 2\tau-c\tau) \\
p_{u,u} & = 1-M_0(\lambda_1, 2c\tau-\tau) \notag \\
        & \qquad + M_0(\lambda_1, 2c\tau-\tau) \left[N(\lambda_u,\lambda_1, 2\tau - c\tau) \right. \notag \\
        & \left. \qquad + M_0(\lambda_1+\lambda_u, 2\tau-c\tau)p_u \right] \\
p_{u,v} & = M_0(\lambda_1, 2c\tau-\tau)M_0(\lambda_1+\lambda_u, 2\tau-2c\tau)p_v 
\end{align}
with $N(\lambda_u,\lambda_v,t)$ is as given in 
Lemma~\ref{lem:compete block} and $M_0(\lambda_x, t)$ is the 
probability of $0$ arrival in a Poisson process in a time 
interval $t$ with arrival rate $\lambda_x$. Hence
$M_0(\lambda_x, \tau) = e^{-\lambda_x\tau}$.
\end{theorem}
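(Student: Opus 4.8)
The plan is to exploit the memorylessness of the Poisson mining process: from any state, the instant at which each miner produces its next block is governed by an independent exponential clock that is reset every time a new block is broadcast. First I would fix the broadcast of the block that placed the chain in state $u$ as time $0$ and work out the \emph{activation schedule} of the state --- which miner is able to start PoW at which instant. Here the simplification that a block is never released until $\tau$ time has elapsed since the previous broadcast is essential: together with full network synchrony it guarantees that every honest miner has finished validating the previous block before the next one arrives, so the pre-PoW delay of a miner is a fixed quantity rather than one depending on the exact arrival time (the appendix flags precisely that, without this, a block can arrive mid-validation and push the delay beyond $2\tau$). In state $1$ this gives $n_1$ active from time $c\tau$ (it created the block, so it skips validation and only pays $c\tau$ to create the next one) and every other miner active from $2\tau$ (validate plus create). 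In a state $u\neq 1$, the creator $n_u$ is active from $\tau$, miner $n_1$ from $2c\tau$, and all remaining miners from $2\tau$; the ordering here uses $c>1/2$, which gives $\tau<2c\tau<2\tau$.

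Next I would cut the timeline after time $0$ into \emph{phases} delimited by these activation instants and condition on the phase in which the next block is produced. In state $1$ there are two phases: $(c\tau,2\tau]$ in which only $n_1$ mines, and $(2\tau,\infty)$ in which all miners mine. In a state $u\neq 1$ there are three: $(\tau,2c\tau]$ with only $n_u$, $(2c\tau,2\tau]$ with only $n_u$ and $n_1$, and $(2\tau,\infty)$ with all miners. The transition probabilities then follow from three elementary facts about independent exponential clocks: in a single-miner phase of length $\ell$ with rate $\lambda_x$, the next block lands inside it with probability $1-M_0(\lambda_x,\ell)$ and otherwise the phase is survived with probability $M_0(\lambda_x,\ell)=e^{-\lambda_x\ell}$; in a two-miner phase of length $\ell$ with rates $\lambda_x,\lambda_y$, miner $x$ wins inside the phase with probability $N(\lambda_x,\lambda_y,\ell)$ by Lemma~\ref{lem:compete block} while neither mines with probability $M_0(\lambda_x+\lambda_y,\ell)$; and in the terminal all-miner phase the race is a clean competition of independent exponentials, so $n_w$ produces the next block with probability $\lambda_w/\sum_j\lambda_j=p_w$. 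Multiplying the survival probabilities of the phases that are passed through by the ``win'' probability in the phase where the block lands, and summing over phases, produces the five stated expressions; for instance $p_{1,1}$ is ``$n_1$ lands its block in the first phase'', with probability $1-M_0(\lambda_1,2\tau-c\tau)$, plus ``the first phase is survived and then $n_1$ wins the all-miner race'', with probability $M_0(\lambda_1,2\tau-c\tau)\,p_1$, which is exactly the claimed formula; the cases $p_{1,v},p_{u,1},p_{u,u},p_{u,v}$ are assembled the same way.

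The step I expect to be the main obstacle is the bookkeeping that makes this phase decomposition legitimate, not the individual probability evaluations. Two points need care. First, one must actually justify the activation schedule from the $\tau$-spacing simplification and network synchrony, i.e.\ show the pre-PoW delays are the deterministic values claimed; this is the simplification the appendix explicitly introduces to avoid an arrival-time-dependent, larger-than-$2\tau$ wait. Second is the memoryless-restart argument: whenever the next block is mined by $n_u$ (resp.\ $n_1$, resp.\ any miner) inside an \emph{early} phase, I must verify that the configuration it leaves behind is stochastically identical to a freshly entered copy of the corresponding state, so that the analysis can simply recurse without conditioning on history --- this includes checking that a block $n_1$ happens to mine in the sub-interval $(c\tau,\tau)$, which the simplification forces it to withhold until time $\tau$, creates no backlog at any honest miner. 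Once these are settled, the only remaining delicacy is tracking the phase lengths $(2c-1)\tau$, $(2-2c)\tau$ and $(2-c)\tau$ and inserting each into the correct $M_0$ or $N$ factor, with the inequality $c>1/2$ (hence $2c\tau-\tau>0$) ensuring no phase is empty. The complementary regime $c\le 1/2$ is handled by the same argument with the activation order of $n_1$ and $n_u$ interchanged, and I would omit it as routine.
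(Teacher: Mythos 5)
Your proposal follows essentially the same route as the paper's own proof: after a block's broadcast you split time into activation phases (only the creator, then creator plus $n_1$ at $2c\tau$, then everyone at $2\tau$ for states $u\neq1$; only $n_1$ from $c\tau$ then everyone for state $1$), use $M_0$ for surviving a single-miner phase, Lemma~\ref{lem:compete block} for the two-miner race, and the share $p_w$ once all miners are active, which is exactly the case-by-case argument the paper gives. One remark: your phase lengths $(2c-1)\tau$, $(2-2c)\tau$, $(2-c)\tau$ match the paper's proof text (e.g.\ $N(\lambda_u,\lambda_1,2\tau-2c\tau)$ and only $n_u$ active on $(\tau,2c\tau]$), so the occurrences of $2\tau-c\tau$ and of $\lambda_1$ in the stated $u$-state formulas appear to be typographical slips in the theorem statement rather than gaps in your derivation.
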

\begin{proof}
Transition of state $1$ to $1$ can happen either if $n_1$
mines the next block during $2\tau-c\tau$ interval or if 
$n_1$ mines the block after time $2\tau$. The former happens 
with a probability $1-M_0(\lambda_1, 2\tau-c\tau)$.
The latter happens with a probability $p_1$ conditioned on 
that the former event did not happen. Hence the probability of the 
latter is $p_1 M_0(\lambda_1,2\tau-c\tau)$. 
Also since these two events are mutually exclusive, $p_{1,1}$ 
is sum of the probability of the events.
Similarly, starting with state $1$, any other miner $n_v$ 
will mine the next block only if $n_1$ does not mine 
the block during an interval of length $2\tau-c\tau$  starting 
at $c\tau$. Also since all miners will be mining after 
time $2\tau$ if no block was mined before that, 
the probability that the next winner would be $n_v$ 
is $p_v$. Hence the transition probability $p_{1,v}$
equal to $p_vM_0(\lambda_1, 2\tau-c\tau)$.

Alternatively, starting from a state $u$ with $u\ne1$, $n_u$ 
will mine the next block during time interval $(\tau,2c\tau]$ 
with probability $1-M_0(\lambda_u,2c\tau-\tau)$. Otherwise
$n_u$ can mine the block during $(2c\tau,2\tau]$ . But as both 
$n_1$ and $n_u$ will be mining during $(2c\tau,2\tau]$, 
the probability of $n_u$ mining the block before $n_1$ is equal to 
$N(\lambda_u,\lambda_1,2\tau-2c\tau)$. Lastly if $n_u$ mine the
block in neither of these interval, $n_u$ will mine the next
block with probability $p_u=\lambda_u/\lambda$. 
Combining the above will give the transition probability 
$p_{u,u}$. The transition probability $p_{u,1}$ can be derived 
similarly.

Lastly, state transition from a state $u$ to $v$ with 
$u\ne v\ne1$ can happen if neither $n_1$ nor $n_u$ mines a
block prior time $2\tau$. Hence the transition probability 
$p_{u,v}$ is equal to 
$M_0(\lambda_1,2c\tau-\tau)M_0(\lambda_1+\lambda_u,2\tau-2c\tau)p_v$
\end{proof}

Using the above state transition probabilities and mining power 
from Table~\ref{tab:mining fraction} we numerically compute
the stationary probabilities of the Markov chain with all miners being honest. 
Figure~\ref{fig:high tau honest} 
illustrates our results for different $c$ with varying $\tau$.
%

\noindent
{\bf Higher $\tau$ in the presence of an Adversary.}
Let node $n_1$ with arrival rate $\lambda_1$ be controlled
by an adversary \adv. We consider two different
behaviors of the adversarial node $n_1$. First, $n_1$ 
validates the received blocks as per the protocol but instantly
creates a block by putting transactions whose execution 
results are already known to $n_1$. In second $n_1$ skips 
validation of the received block as well and instantly 
starts PoW on top a new full block. The former attack is 
very practical as any miner can do that without any additional
computational resources. The later damages fairness
more severely but requires $n_1$ to produce final state due 
to transactions in the received block without executing them. 
An adversary can launch the later attack if it can download
the modified state due to previous block from the creator of 
of the previous block. 
Figure~\ref{fig:eth skip timing}
illustrates which miners do PoW at different time intervals
starting from the instant of successful PoW on the previous 
block. The diagram at the top is when $n_1$ skips only creation
and in the diagram at the bottom is when $n_1$ skips both validation 
and creation. Here, we will only derive the transition
probabilities for the latter, and the transition probabilities
for the former can be derived similarly. 
\begin{figure}[t!]
\centering
\footnotesize
\includegraphics[height=3cm]{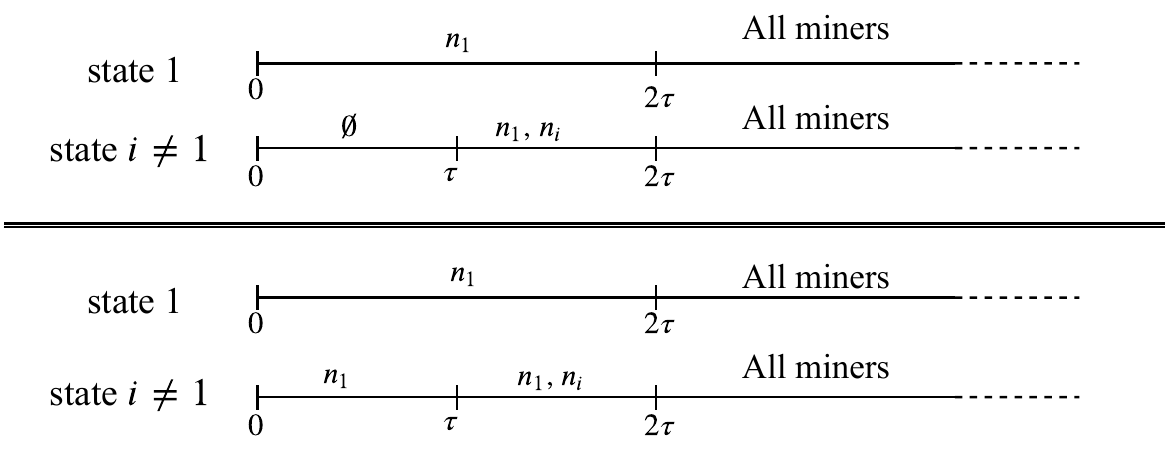}
\caption{Miners who are active during each time interval from 
the instant when last block gets generated. The diagram in the
top corresponds to a adversarial node $n_1$, who only skips creation 
of blocks and the diagram at the bottom is in 
when $n_1$ skips both validation of received block and 
creation of new ones.}
\label{fig:eth skip timing}
\end{figure}

\begin{theorem}
Given $\tau$ and $\Lambda=\{\lambda_j|j=1,2,\ldots,|N|\}$, 
with $\lambda_1$ as arrival rate of the adversarial node $n_1$,
the transition probabilities for the Markov chain when $n_1$
skips validation of received blocks and the creation of new ones
are:
\begin{align}
p_{1,1} &= 1-M_0(\lambda_1, 2\tau) + p_1M_0(\lambda_1, 2\tau) \\
p_{1,v} &= p_vM_0(\lambda_1, 2\tau)\\
p_{u,1} &= 1-M_0(\lambda_1, \tau) + M_0(\lambda_1, \tau)[N(\lambda_1,\lambda_u,\tau) \notag\\
        & \qquad+ M_0(\lambda_1+\lambda_u, \tau)p_1] \\
p_{u,u} &= M_0(\lambda_1, \tau)[N(\lambda_u,\lambda_1,\tau) + M_0(\lambda_1+\lambda_u, \tau)p_u] \\
p_{u,v} &= M_0(\lambda_1, \tau)M_0(\lambda_1+\lambda_u, \tau)p_v
\end{align}
\end{theorem}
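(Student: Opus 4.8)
The plan is to mirror the proof of the all-honest case: measuring time from the broadcast of the most recent block, decompose the time axis into phases during which a fixed subset of miners is running PoW, compute the relevant winning probabilities within each phase, and then stitch the phases together using the memorylessness (indeed, independent increments) of the Poisson arrival processes. The state records the miner of the latest block, so each transition corresponds to exactly one new block, and the target state is determined by who wins the PoW race under the applicable schedule.

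First I would read off the PoW schedule from the bottom diagram of Figure~\ref{fig:eth skip timing}. Because $n_1$ skips both validation and block creation, it always begins PoW at time $0$. Starting from state $1$ (the latest block was $n_1$'s), every honest miner $n_v$ spends $[0,\tau]$ validating that block and $[\tau,2\tau]$ creating its own, hence starts PoW only at $2\tau$; thus only $n_1$ mines during $[0,2\tau]$ and all miners mine after $2\tau$. Starting from state $u\ne 1$, the creator $n_u$ skips validation but spends $[0,\tau]$ creating, so it starts PoW at $\tau$, while the remaining honest miners start at $2\tau$; hence $n_1$ alone mines during $[0,\tau]$, $n_1$ and $n_u$ both mine during $[\tau,2\tau]$, and all miners mine after $2\tau$.

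Next, for each of the five transition types I would enumerate the mutually exclusive events leading to the target state, conditioning on the phase in which the winning block is found. In a single-miner phase of length $t$, ``no block is mined'' has probability $M_0(\lambda_1,t)=e^{-\lambda_1 t}$ and otherwise the block is $n_1$'s; in the two-miner phase $[\tau,2\tau]$ the probability that $n_1$ (resp. $n_u$) wins within that window is $N(\lambda_1,\lambda_u,\tau)$ (resp. $N(\lambda_u,\lambda_1,\tau)$) by Lemma~\ref{lem:compete block}, while the probability that neither mines is $M_0(\lambda_1+\lambda_u,\tau)$; once all miners are active the next block is $n_v$'s with probability $p_v=\lambda_v/\lambda$, since the superposition of the Poisson processes is Poisson and the winner is picked in proportion to the rates. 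The independent-increments property lets me multiply the ``no arrival in this phase'' probability by the conditional win probability recomputed afresh at the start of the next phase. Summing the cases yields, for instance, $p_{u,1}=\bigl(1-M_0(\lambda_1,\tau)\bigr)+M_0(\lambda_1,\tau)\bigl[N(\lambda_1,\lambda_u,\tau)+M_0(\lambda_1+\lambda_u,\tau)\,p_1\bigr]$, and the remaining four formulas follow in exactly the same fashion; the $\tau$-delayed release assumption is invoked only to guarantee that by the time the new block is broadcast every honest miner has finished processing its parent, so each round begins from a common committed state and the chain above is a genuine Markov chain.

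The computation is routine once the schedule is fixed; the only thing needing real care is verifying that the enumerated events are genuinely mutually exclusive and exhaustive---so the case probabilities add without inclusion--exclusion corrections---and that the ``restart'' after an empty phase is legitimate, which is where the independent-increments property of the Poisson process (not merely memorylessness of a single exponential) does the work. Thus the main obstacle, such as it is, is the careful phase decomposition and the no-double-counting argument, rather than any difficult calculation.
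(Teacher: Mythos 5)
Your proposal is correct and follows essentially the same route as the paper: fix the PoW schedule implied by the skipping adversary (only $n_1$ active in the first phase, $n_1$ together with the last block's creator in the second, everyone after $2\tau$), then sum the mutually exclusive phase-by-phase winning events using $M_0$ for empty phases, the two-miner race probability $N(\cdot,\cdot,\tau)$ from Lemma~\ref{lem:compete block}, and $p_v$ once all miners are active. Your explicit appeal to independent increments to justify restarting after an empty phase, and to the $\tau$-delayed-release assumption for the Markov property, only makes explicit what the paper leaves implicit.
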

\begin{proof}
When $n_1$ mines the last block, all other nodes will start
PoW for the next block only after $2\tau$ time units whereas
$n_1$ will do PoW for the entire $2\tau$ interval. Hence
transition from state $1$ to $1$ will happen if either
$n_1$ mines the block in the first $2\tau$ time interval 
or $n_1$ mine the after $2\tau$ time units. 
The former happens with a 
probability $1-M_0(\lambda_1, 2\tau)$ and the later can happen
with probability $p_1$ conditioned on the former not happening.
Hence $p_{1,1}$ is 
$1-M_0(\lambda_1, 2\tau) + p_1M_0(\lambda_1, 2\tau)$. Similarly,
the transition from state $1$ to another state can only happen
if $n_1$ did not mine during the first $2\tau$ time units. Also
since all nodes will mining after $2\tau$ time units the transition
probability $p_{1,v}$ is equal to $p_vM_0(\lambda_1, 2\tau)$.

When a node $n_u, u\ne1$ mines the last block, $n_1$ instantly
starts PoW for the next block. Hence for the first $\tau$ units 
of time only $n_1$ will be mining as even $n_u$ will be busy 
creating the next block. During time $(\tau,2\tau]$ both
$n_1$ and $n_u$ be mining and after $2\tau$ the rest of the
miners will start PoW for the next block. Thus $n_1$ can 
mine the next block either during the first $\tau$  or in the
interval $(\tau,2\tau]$ or after $2\tau$. The first can happen with
a probability $1-M_0(\lambda_1, 2\tau)$, the second 
with probability $N(\lambda_1,\lambda_a,\tau)$ conditioned on
that the first did not occur, and lastly $n_1$ will mine a 
block after time $2\tau$ with probability $p_1$ in case no block was mined prior to $2\tau$. Combining 
the above will give us the transition probability $p_{u,1}$. 

Similarly transition from state $u$ to itself happens 
when $n_u$ mines the next block either during time interval
$(\tau,2\tau]$ or after time $2\tau$. The former happens
with probability $N(\lambda_u,\lambda_1,\tau)$ conditioned
on the event that  $n_1$ did not mine the next block during first $\tau$
time units and the later with probability $p_u$ conditioned on
neither $n_u$ nor $n_1$ mining a block before $2\tau$.
Finally, transition to state to a state $v,v\ne u\ne1$, will 
only happen with if neither $n_u$ nor $n_1$ mine the next block
before $2\tau$. Hence the transition probability $p_{u,v}$ is 
equal to $M_0(\lambda_1, \tau)M_0(\lambda_1+\lambda_u, \tau)p_v$. 
\end{proof}
Figure~\ref{fig:high tau adv} illustrates numerically
computed fraction of blocks mined by node $n_1$ which
can validate blocks faster (by factor $1/c$) and skips creation of blocks. Note
that $c=0.0$ in the figure corresponds to the case where
$n_1$ skips both validation of received blocks and creation
of next ones. 
\begin{figure}[t!]
    \centering
    \pgfplotsset{footnotesize,height=4.5cm, width=0.95\linewidth}
    \begin{tikzpicture}
    \begin{axis}[
        legend pos=north west,
        ylabel=Fraction of blocks mined,
        xlabel=$\tau/\intv$,
        grid=minor,
        xtick={0.011, 0.122, 0.205, 0.260},
        ]
        \addplot table [x=t, y=c0, col sep=comma] {data/stat-advc.csv};
        \addplot table [x=t, y=c20, col sep=comma] {data/stat-advc.csv}; 
        \addplot [mark=x, orange] table [x=t, y=c50, col sep=comma] {data/stat-advc.csv}; 
        \addplot table [x=t, y=c60, col sep=comma] {data/stat-advc.csv}; 
        \addplot table [x=t, y=c100, col sep=comma] {data/stat-advc.csv}; 
        \addlegendentry{$c=0.00$}
        \addlegendentry{$c=0.20$}
        \addlegendentry{$c=0.50$}
        \addlegendentry{$c=0.60$}
        \addlegendentry{$c=1.00$}
    \end{axis}
    \end{tikzpicture}
    \caption{Fraction of blocks mined by a adversary who 
    validates a received block in time $\tau_1 = c\tau$ units 
    of time, instantly creates block by putting transactions 
    whose execution results are already known to the adversary,
    and controls approximately $~0.33$ fraction of the mining 
    power.}
    \label{fig:high tau adv}
\end{figure}
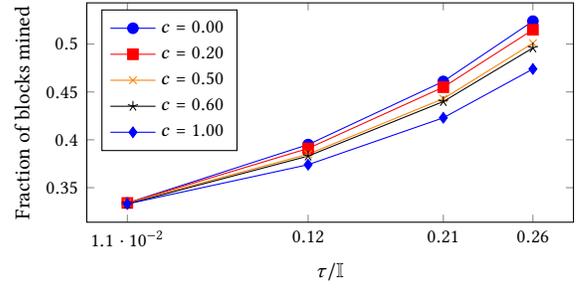

\noindent
{\bf Closed-form probabilities.}
In a system with $n$ miners, one needs to solve a system of $n$ 
linear equations to get closed-form equations for the stationary 
probabilities. We do not solve these as a part of this paper.
But we compute closed-form stationary distribution for a 
particular case is given below.

Let there be $K$ nodes in the network with an equal mining power of 
each node. Among these $K$ nodes an adversary \adv controls a $f$ 
fraction of the node and all adversarial nodes skips both 
validation and creation of blocks. 

Consider an honest node which has just mined a 
block and just finished creating a new block. We assume 
that $K$ is large enough to ensure that the probability of 
this honest miner successfully mining in the next
 $\tau$ 
units 
is approximately zero. In other words, we assume that the 
probability of all honest nodes together mining in this interval 
is zero and so only \adv\ mines in this interval. 
Under this assumption the Markov 
chain discussed so far can be reduced to a MC with only two 
states as depicted in Figure~\ref{fig: reduced mc}. 
\begin{figure}[t!]
\includegraphics[width=\linewidth]{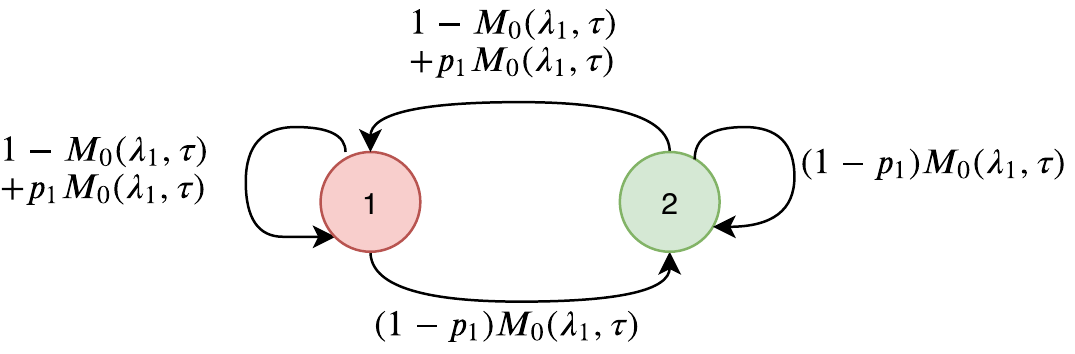}
\caption{Reduced Markov chain to analyze Skipping adversary. A 
general Markov chain for the skipping adversary can be reduced to 
the Markov chain above for large $n$ where each adversary controls 
a $f$ fraction of compute power and honest mining power is 
distributed among the remaining $n-1$ honest power such that 
the maximum amount of compute power a single miner controls extremely 
small.}
\label{fig: reduced mc}
\end{figure}

State 1 (resp. state 2) represents that the last state was mined 
by a adversarial (resp. honest) node. 
The transition
probabilities are given as:
\begin{align}
p_{0,0} &=p_{1,0} = 1-M_0(f\lambda, 2\tau) + fM_0(f\lambda, 2\tau) \\
p_{0,1} &=p_{1,1} = (1-f)M_0(f\lambda, 2\tau)
\end{align}
Let $\trans_r$ be the transition probability matrix of this Markov 
chain and let $\Psi=(\psi_1, \psi_2)$ be the stationary 
probabilities of the states $1,2$ respectively. Then we solve, 
$\Psi\trans_r = \Psi$ to get the stationary probabilities and 
these values are:
\begin{align}
\psi_1 &= 1-M_0(f\lambda,2\tau) + fM_0(f\lambda,2\tau),\\ 
\psi_2 &= 1-\psi_1
\end{align}

\section{Queue overflow in honest execution.}
In this section we will evaluate the performance of \prot\ 
when all parties are honest as we believe this will be the 
most likely case.

For any given $\ths,\lambda,\Delta$, and $\tau$ we compute the bounds 
using equation~\ref{eq:zeta bound} by putting $\beta=0$ and $t^*=0$.
This corresponds to a execution of \prot\ in the absence of an adversary.
Figure~\ref{fig:honest bound} illustrates the upper bound on the fraction
of time queue at a honest miner will have more than $\ths$ blocks for
different values of $\ths$. All plots are for $\tau/\intv=0.5$.  
Note that in the absence of \adv, for $\intv/\Delta=10$, with $\ths$ as low 
as $20$ queue, less than one in a billion honest blocks will hit a queue larger than $\ths$. 
\pgfplotsset{small,label style={font=\fontsize{8}{9}\selectfont},legend style={font=\fontsize{7}{8}\selectfont},height=4.5cm,width=0.95\linewidth}
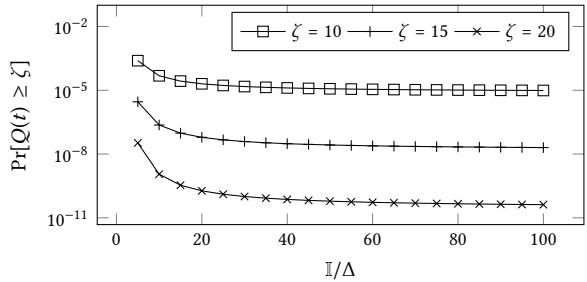
\begin{figure}[h!]
    \centering
    \begin{tikzpicture}
    \begin{semilogyaxis}[
        legend pos=north east,
        legend columns=3,
        xlabel={$\intv/\Delta$},
        ylabel= {Pr$[Q(t)\ge\ths]$},
        grid=none,
        ymax=0.1,
        ]
        \addplot[mark=square, black] table [x=c, y=10-75, col sep=comma] {data/h-Data.csv}; 
        \addplot[mark=+, black] table [x=c, y=15-75, col sep=comma] {data/h-Data.csv};
        \addplot [mark=x, black] table [x=c, y=20-75, col sep=comma] {data/h-Data.csv};
        \addlegendentry{$\ths=10$}
        \addlegendentry{$\ths=15$}
        \addlegendentry{$\ths=20$}
    \end{semilogyaxis}
    \end{tikzpicture}
    \caption{Upper bound on the probability of queue at a honest miner
    crossing the threshold in the absence of an adversary for $\tau/\intv=0.5$}
    \label{fig:honest bound}
\end{figure}

\newpage
\section{Interaction between contracts}
\label{apx:interactive contracts}
We measure the interaction between contracts for 50k blocks 
starting at a height of 6.5 Million. To measure this 
information, we sync an Ethereum Geth Client in ${\tt archive}$
mode. Such a node stores the ${\tt debug\_trace}$ of all 
transactions starting from the genesis block. We loop through
debug traces of each transaction in every block in the 
given range and use EVM ${\tt CALL, DELEGATECALL,}$ and 
${\tt STATICCALL}$ opcode to determine whether the transaction 
invoke a function which internally calls functions from 
other transactions. Figure~\ref{fig:contract interaction}
illustrates our findings. Specifically, we observe that
in this corresponding range, in each block more than 
$50\%$ of the transactions are addressed to 
a smart contract. Also, among all the transactions
approximately $20\%$ of the transactions internally
invokes function calls to other contracts. 
\begin{figure}[h!]
    \centering
    \pgfplotsset{footnotesize,height=4.5cm, width=0.95\linewidth}
    \begin{tikzpicture}
    \begin{axis}[
        legend pos=north west,
        xlabel={Block Number},
        ylabel= {Fraction },
        grid=minor,
        ymax=0.80,
        legend columns=2,
        ]
        \addplot table [x=blockHeight, y=intContRatio, col sep=comma] {data/contInt65.csv}; 
        \addplot[mark=+, blue] table [x=blockHeight, y=totalContRation, col sep=comma] {data/contInt65.csv};
        \addlegendentry{Interactive contract txns}
        \addlegendentry{Contract txns}
    \end{axis}
    \end{tikzpicture}
    \caption{Fraction of contract transactions in a block and the fraction of contract transactions which invokes at least one function from a different contract for 50 thousands blocks starting with block height 6.5 Million.}
    \label{fig:contract interaction}
\end{figure}
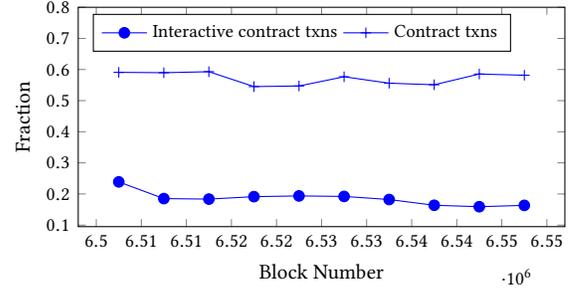

\end{document}